\DeclareMathAlphabet{\altmathcal}{OMS}{cmsy}{m}{n}
\pgfplotsset{compat=1.17}
\newcommand{\koika}{Kôika\xspace}
\newcommand{\lr}{Lakeroad\xspace}
\newcommand{\lrfn}{\text{$f_{\textsc{lr}}$}\xspace}   
\newcommand{\lrfnbmc}{\text{$f_{\textsc{lr}}^{*}$}\xspace}   
\newcommand{\SynProg}{\mathsf{Prog}\xspace}
\newcommand{\SynId}{\mathsf{Id}\xspace}
\newcommand{\SynBv}{\mathsf{BV}\xspace}
\newcommand{\SynVar}{\mathsf{Var}\xspace}
\newcommand{\SynNode}{\mathsf{Node}\xspace}
\newcommand{\Op}{\mathsf{OP}\xspace}
\newcommand{\OpBv}{\Op_{bv}}
\newcommand{\OpWire}{\Op_{w}}
\newcommand{\IRReg}{\lstinline[language=thelang]{Reg}\xspace}
\newcommand{\IRPrim}{\lstinline[language=thelang]{Prim}\xspace}
\newcommand{\Time}{\textsf{Time}\xspace}
\newcommand{\Sketch}{\textsc{Sketch}\xspace}
\definecolor{navy}{HTML}{0f1566}
\newtheorem{property}{Property}
\definecolor{mygreen}{rgb}{0,0.6,0}
\lstdefinestyle{lispstyle}{
  backgroundcolor=\color{white},
  basicstyle=\ttfamily\footnotesize,
  breakatwhitespace=false,
  breaklines=true,
  captionpos=b,
  commentstyle=\color{mygreen},
  extendedchars=true,
  keepspaces=true,
  keywordstyle=\color{black},
  language=Lisp,
  morekeywords={*,...},
  numbers=none,
  numbersep=5pt,
  numberstyle=\tiny\color{mygray},
  rulecolor=\color{black},
  showspaces=false,
  showstringspaces=false,
  showtabs=false,
  stringstyle=\color{black},
  tabsize=2,
  title=\lstname
}
\lstdefinestyle{pystyle}{
  backgroundcolor=\color{white},
  basicstyle=\ttfamily\footnotesize,
  breakatwhitespace=false,
  breaklines=true,
  captionpos=b,
  commentstyle=\color{mygreen},
  extendedchars=true,
  keepspaces=true,
  keywordstyle=\color{blue},
  language=Python,
  morekeywords={*,...},
  numbers=none,
  numbersep=5pt,
  numberstyle=\tiny\color{mygray},
  rulecolor=\color{black},
  showspaces=false,
  showstringspaces=false,
  showtabs=false,
  stringstyle=\color{black},
  tabsize=2,
  title=\lstname
}
\newcommand\YAMLcolonstyle{\color{black}\mdseries}
\newcommand\YAMLkeystyle{\color{black}\bfseries}
\newcommand\YAMLvaluestyle{\color{black}\mdseries}
\newcommand\language@yaml{yaml}
\lstdefinelanguage
\newcommand\ProcessThreeDashes{\llap{\color{cyan}\mdseries-{-}-}}
\DeclareMathOperator{\defn}{\Coloneqq}
\lstdefinelanguage{thelang}{
  basicstyle=\ttfamily,
  keywordstyle=\color{black}\bfseries,
  morekeywords=[1]{let,in,:=,Reg,Prim,Op},
  morekeywords=[2]{},
  morekeywords=[3]{},
  alsoletter={:=},
  morestring=[b]",
  morecomment=[l]{\#},
  morecomment=[s]{(*}{*)},
  moredelim=**[is][\color{white}]{(&}{&)},
}
\newcommand{\seq}[1]{\langle#1\rangle}
\newcommand{\UberLang}{\ensuremath{\altmathcal{L}_\textsc{lr}}\xspace}
\newcommand{\SpecLang}{\ensuremath{\altmathcal{L}_\textsc{beh}}\xspace}
\newcommand{\ImplLang}{\ensuremath{\altmathcal{L}_\textsc{struct}}\xspace}
\newcommand{\SketchLang}{\ensuremath{\altmathcal{L}_\textsc{sketch}}\xspace}
\newcommand{\tighten}{\looseness=-1}
\begin{abstract}
\begin{figure}
\centering

\includegraphics[width=0.91\columnwidth]{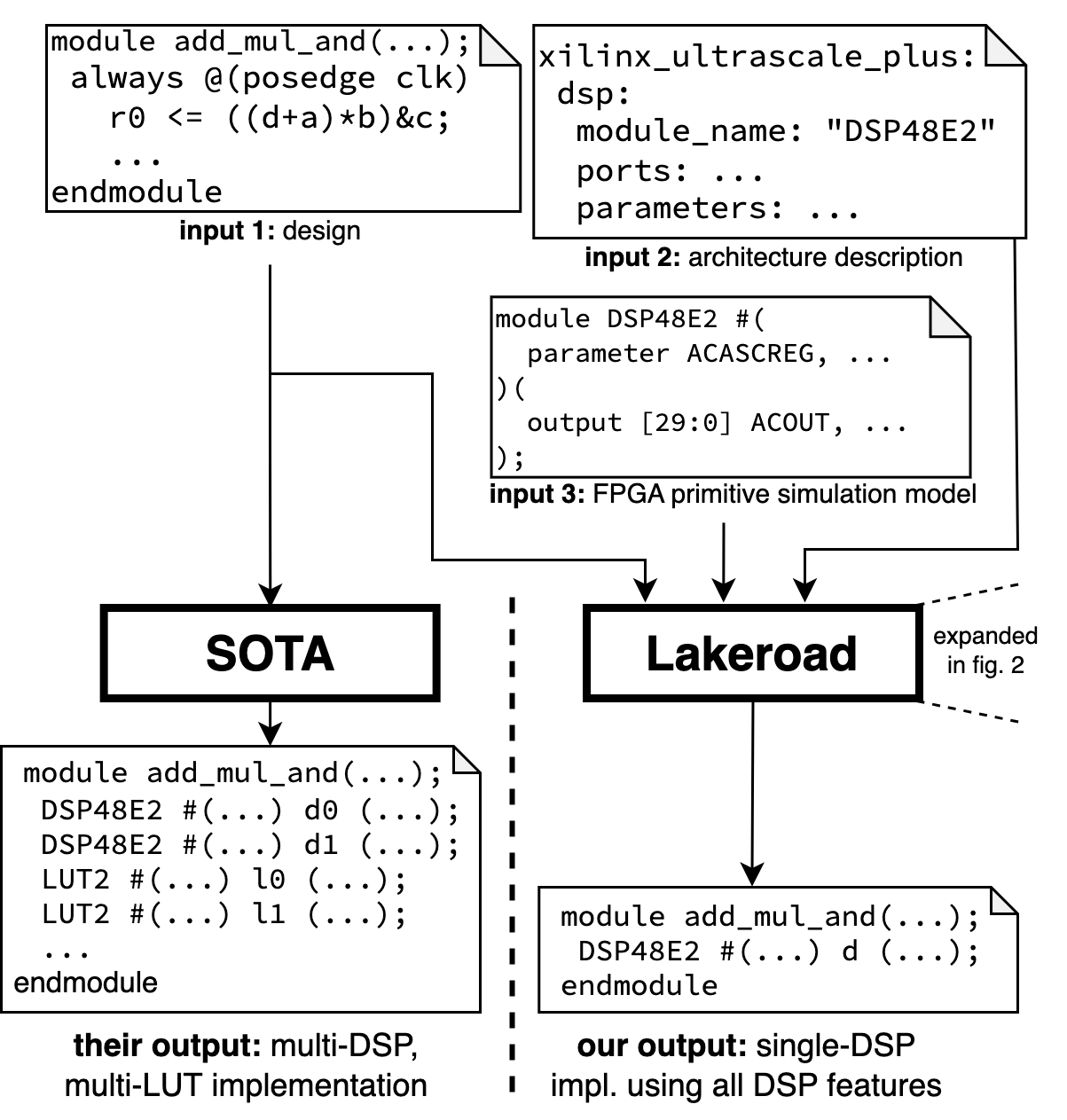}

\vspace{-3mm}
\caption{
Even given a simple input
  design (input 1),
  the state-of-the-art (SOTA)
  hardware synthesis tool
  for Xilinx FPGAs
  frequently
  fails to efficiently use 
  programmable primitives
  like DSPs.
\lr,
  on the other hand,
  can utilize all features
  of programmable primitives
  given just a short description
  of an FPGA architecture (input 2)
  and the vendor-provided 
  simulation models
  of the primitive (input 3).\tighten
}
\label{fig:firstpage}

\vspace{-5mm}
\end{figure}

FPGA technology mapping is the process of
  implementing a hardware design expressed in 
  high-level HDL (hardware design language) code
  using the low-level, architecture-specific primitives of 
  the target FPGA.
As FPGAs become increasingly heterogeneous, 
  achieving high performance
  requires hardware synthesis tools 
  that better support mapping to complex, 
  highly configurable primitives 
  like digital signal processors (DSPs).
Current tools
  support DSP mapping via handwritten special-case mapping rules,
  which are laborious to write, error-prone, and often overlook mapping opportunities.
We introduce \lr,
  a principled approach to technology mapping via
  sketch-guided program synthesis.
\lr leverages two techniques---architecture-independent 
  sketch templates and semantics extraction from HDL---to
  provide extensible technology mapping 
  with stronger correctness guarantees
  and higher coverage of 
  mapping opportunities
  than state-of-the-art tools.
Across representative microbenchmarks,
  \lr produces
  2--3.5$\times$ the number of optimal mappings
  compared to proprietary state-of-the-art tools
  and
  6--44$\times$ the number of optimal mappings
  compared to popular open-source tools,
  while also providing correctness guarantees
  not given by any other tool.

\end{abstract}
\begin{document}

\title[FPGA Technology Mapping Using Sketch-Guided Program Synthesis]{FPGA Technology Mapping Using \\ Sketch-Guided Program Synthesis}
\date{}
\author{Gus Henry Smith}
\orcid{0000-0001-9754-233X}
\affiliation{%
    \institution{University of Washington}
    \city{Seattle}
    \country{USA}}
\email{gussmith@cs.washington.edu}

\author[BK]{Ben Kushigian}
\orcid{0009-0009-2504-1582}
\affiliation{%
    \institution{University of Washington}
    \city{Seattle}
    \country{USA}}
\email{benku@cs.washington.edu}

\author[VC]{Vishal Canumalla}
\orcid{0009-0001-5418-1279}
\affiliation{%
    \institution{University of Washington}
    \city{Seattle}
    \country{USA}}
\email{vishalc@cs.washington.edu}

\author[AC]{Andrew Cheung}
\orcid{0009-0006-0661-2640}
\affiliation{%
    \institution{University of Washington}
    \city{Seattle}
    \country{USA}}
\email{acheung8@cs.washington.edu}

\author[SL]{Steven Lyubomirsky}
\orcid{0009-0003-6747-7014}
\affiliation{
  \institution{OctoAI}
  \city{Seattle}
  \country{USA}
}
\email{slyubomirsky@octo.ai}

\author[SP]{Sorawee Porncharoenwase}
\orcid{0000-0003-3900-5602}
\affiliation{
  \institution{University of Washington}
  \city{Seattle}
  \country{USA}
}
\email{sorawee@cs.washington.edu}

\author[RJ]{Ren{\'e} Just}
\orcid{0000-0002-5982-275X}
\affiliation{%
    \institution{University of Washington}
    \city{Seattle}
    \country{USA}}
\email{rjust@cs.washington.edu}

\author[GLB]{Gilbert Louis Bernstein}
\orcid{0000-0002-3016-1169}
\affiliation{%
    \institution{University of Washington}
    \city{Seattle}
    \country{USA}}
\email{gilbo@cs.washington.edu}

\author[ZT]{Zachary Tatlock}
\orcid{0000-0002-4731-0124}
\affiliation{%
    \institution{University of Washington}
    \city{Seattle}
    \country{USA}}
\email{ztatlock@cs.washington.edu}
\maketitle

\thispagestyle{empty}

\section{Introduction}
\label{sec:intro}

Given a high-level hardware design specification
  (e.g., expressed in behavioral Verilog),
  FPGA technology mappers
  search for an equivalent
  low-level implementation
  in terms of the target FPGA's
  primitives.
See \cref{fig:firstpage} for an example, where
the high-level, behavioral \texttt{add\_mul\_and}
  module (``input 1'')
  is converted into FPGA-specific implementations
  (``their output'' and ``our output'')
  using Xilinx-specific
  \texttt{DSP48E2} and \texttt{LUT2} primitives.
  
Historically,
  FPGAs consisted of relatively simple
  primitives, such as
  lookup tables (LUTs) and carry chains.
Tools like
  ABC~\cite{ABC,abc2,brayton2010abc}
  \textit{automatically} 
  map to these basic primitives
  by translating designs
  to a library of simple logic gates
  and then packing those gates
  into LUTs.

However, FPGAs are becoming
  increasingly heterogeneous
  via
  the inclusion of specialized and diverse primitives
  such as digital signal processors (DSPs).
Utilizing these specialized primitives
  effectively
  is now
  crucial for achieving
  high performance~\cite{vega2021reticle}.
These specialized primitives
  make FPGA technology mapping far more challenging
  since technology mappers must now
  explore a much larger search space
  while also satisfying each primitive's
  complex set of restrictions and dependencies.
For example, Xilinx's DSP48E2
  is a multifunction 
  DSP
  with nearly
  100 ports and parameters,
  whose numerous configurations
  enable 
  support for a large variety of computations.
The manual for the DSP48E2 alone
  is 75 pages long,
  where considerable text details
  the complex restrictions
  between the settings of the nearly 100
  ports and parameters.

Existing technology mapping tools
  frequently fail to map designs
  to
  specialized primitives like DSPs,
  requiring manual work for the hardware designer
  to recover the performance of their design.
While existing toolchains
  have the ability to automatically infer
  locations where specialized primitives
  can be used in large designs,
  inference often fails~\cite{xilinxforum1,xilinxforum2,inferringreddit}.
In these cases, the designer can either
  accept lower performance and higher resource
  utilization,
  or they can perform
  what we call
  \textit{partial design mapping.}
During partial design mapping, 
  the designer
  manually identifies and separates out
  the module that should be mapped
  to a DSP.
They can attempt to re-run technology mapping
  on that module alone,
  in the hopes that mapping succeeds.
Yet existing toolchains often fail
  \textit{even in the partial design mapping case:}
  \cref{fig:firstpage} shows a
  simple module
  \texttt{add\_mul\_and}
  which \textit{should} fit on a single DSP48E2
  according to the DSP's manual,
  but is instead mapped to multiple DSPs and LUTs
  by current state-of-the-art tools.\footnote{
    Licensing restrictions forbid naming the
    specific proprietary tools, but they are familiar,
    standard packages used by many hardware designers.
  }
In the worst case,
  hardware designers are forced to manually
  instantiate complex primitives by hand,
  e.g., by looking through the 75-page
  DSP48E2 user manual
  to manually configure the DSP's dozens
  of ports and parameters.

Current state-of-the-art
  technology mappers 
  are implemented via
  ad hoc, handwritten pattern matching procedures,
  which
  fall short in three primary ways.
First,
  as we saw above,
  they are \textbf{incomplete:}
  they miss many mapping opportunities,
  even across microbenchmarks based on vendor documentation.
Second, they \textbf{do not provide strong correctness guarantees:}
  recent work highlights the significant number of bugs found across 
  all major hardware synthesis tools~\cite{herklotz2020finding}.
Third, they are \textbf{difficult to extend:}
  \textit{each} new complex primitive requires
  supporting detailed semantics
  and adding hundreds of new, special-case
  syntactic pattern matching rules~\cite{wolf2013yosys}.

This paper's
  key observation is that 
  technology mapping
  is well-suited for the application
  of automated reasoning procedures---%
  specifically,
  \textit{program synthesis}~\cite{gulwani2017program}.
We observe that 
  the configuration space of
  a programmable FPGA primitive
  is essentially a small, bespoke
  programming language,
  and that
  program synthesis
  could be applied
  to automatically generate
  primitive configurations.
We explore how
  program synthesis
  can simplify the design and implementation of
  FPGA technology mappers while providing
  \textbf{correct},
  \textbf{extensible}, and
  \textbf{more complete}
  support for mapping to 
  diverse, highly configurable primitives
  like DSPs.
Program synthesis techniques rely on
  automated theorem provers like
  SAT and SMT solvers~\cite{de2008z3, barbosa22cvc5}
  to automatically generate programs
  satisfying a given specification.
We demonstrate how
 \textit{sketch-guided program synthesis}~\cite{solar2008program}
  can be adapted
  for FPGA technology mapping,
  leveraging the
  Rosette~\cite{torlak2014lightweight} 
  program synthesis framework.


Sketch-guided program synthesis requires
  encoding the \textit{semantics}
  of the target language:
  in our case,
  a machine-readable,
  mathematical model specifying
  the behavior of each
  FPGA-specific primitive
  being mapped to.
In a typical synthesis tool,
  which generates programs
  for a single target language,
  this is a one-time cost.
However,
  in our setting,
  each new FPGA primitive
  introduces yet another new target language,
  which in turn requires
  extending the tool to encode
  yet another formal semantics.

To support
  correct, extensible, and more complete
  technology mapping, we propose
  automating this process with
  \textbf{semantics extraction from HDL}, 
  adapted from past work~\cite{daly2022synthesizing},
  to automatically extract
  complete primitive semantics
  from vendor-published HDL models
  (\cref{fig:firstpage}, ``input 3'').
Traditionally, such models have
   been used only 
  for
  simulation and validation
  \textit{after} technology mapping;
  we show that using the semantics
  to
  \textit{implement}
  technology mapping
  with a program-synthesis-based approach
  yields substantially more
  complete FPGA technology mapping.

Sketch-guided program synthesis also
  requires \textit{sketches}, which are 
  partially complete programs with ``holes'' to be filled in
  by the solver.
Sketches primarily serve to
  scale synthesis by
  constraining the set of programs that 
  solvers explore when searching for
  one that satisfies
  the given specification,
  i.e., performance at the cost of completeness.
In our setting,
  sketches correspond to
  arrangements of primitives,
  using holes
  as placeholders
  for some of the primitives' 
  ports and parameters.
This could be
  a single DSP with holes for
  its ports and parameters
  (as in the example in \cref{sec:overview-part-2}),
  or a number of LUTs with holes for their LUT memories,
  or even a mixture of LUTs, DSPs, and carry chains.
The synthesizer ``fills in the holes''
  as necessary for
  the low-level FPGA-specific primitive to implement
  a given high-level behavioral design fragment.
Unfortunately,
  developing effective sketches
  still requires synthesis expertise~\cite{10.1145/3140587.3062353,vanGeffenJITSynth}.
  Na\"ively, our approach would also
  require new sketches for each new
  FPGA primitive we target.


To address these challenges, 
  we introduce
  \textbf{architecture-\newline independent sketch templates.}
Hardware designs are often implemented
  using high-level blueprints that are similar 
  across most FPGA architectures---%
  sketch templates
  capture these blueprints
  and make them reusable across architectures.
Therefore, by using sketch templates, we
  greatly reduce the overhead of supporting
  new architectures and
  diverse primitives.
Typically, when adding support for
  a new primitive or FPGA architecture in \lr,
  the hardware designer
  need not write or modify
  any sketch templates.


We leverage
  semantics extraction
  from HDL
  and architecture-independent
  sketch templates
  to build \lr,\footnote{
  \lr is publicly available at
  \url{https://github.com/uwsampl/lakeroad}.
  }
  a new FPGA technology mapper
  based on program synthesis.
  
\lr's prototype implementation automatically
  imports semantics for the LUTs, arithmetic carry chains,
  and DSPs of the Xilinx UltraScale+, Lattice ECP5,
  Intel Cyclone 10 LP,
  and SOFA~\cite{sofa} FPGA architectures. 
The only additional user input to \lr is a short 
  architecture description
  that lists the target FPGA's
  primitives (\cref{fig:firstpage}, ``input 2'').
Architecture descriptions
  only need to be written once per architecture,
  and \lr pre-supplies architecture descriptions
  for the aforementioned architectures.
With the automatically
  extracted primitive semantics
  and the user-provided architecture description,
  we demonstrate that \lr
  is more complete than proprietary tools on a variety
  of microbenchmarks
that are representative of program fragments
  implemented with DSPs during partial design mapping.
In particular,
  \lr maps up to 3.5$\times$ 
  more microbenchmarks than 
  state-of-the-art
  tools for Xilinx, Lattice, and Intel,
  and up to 44$\times$ 
  more microbenchmarks 
  than Yosys.\looseness=-1

This paper makes  
the following key contributions:
\begin{itemize}[leftmargin=*]
\item The novel application of program synthesis
  to produce a technology mapper---\lr---that is
  more \textbf{correct, complete,} and \textbf{extensible} than state-of-the-art 
  tools.
\item A technique for applying
  \textbf{semantics extraction from HDL}
  to automatically generate models
  of hardware usable by 
  formal automated reasoning tools.
\item The concept of 
  \textbf{architecture-independent sketch templates,}
  which capture common patterns in hardware design
  in an architecture-independent way,
  plus \textbf{primitive interfaces} and \textbf{architecture descriptions}, the abstractions
  underlying these templates.
\item A formalization of the \lr
  toolchain and an argument
  for its correctness and sketch-completeness.
\item The first notion of \textbf{technology
  mapping completeness} for FPGA
  compilers.
\item \textbf{Empirical comparisons }of
  \lr and existing hardware synthesis
  tools to evaluate both their
  relative completeness and
  ease of extensibility.



\end{itemize}

In the following sections, we walk through a real-world example
  using both existing tools
  and \lr
  and 
  highlight \lr's design and key features (\cref{sec:overview}); 
  formalize \lr and
  demonstrate its correctness (\cref{sec:formalization}); 
  describe \lr's implementation (\cref{sec:implementation}); and 
  evaluate \lr on its
  completeness of mapping,
  extensibility,
  and expressiveness (\cref{sec:evaluation}) .
\cref{sec:background-and-related-work}
  discusses related work, and
  \cref{sec:conclusion} concludes.
\section{Overview}
\label{sec:overview}

We now walk through an 
  example
  of how current FPGA technology mapping tools can fail
  a hardware designer (\cref{sec:overview-part-1}) 
  and how \lr overcomes these limitations (\cref{sec:overview-part-2}).
In the process,
  we provide a high-level overview
  of \lr's main components.

\subsection{Compiling a Design to a DSP with Existing Tools}
\label{sec:overview-part-1}

Consider the following scenario:
A hardware designer
  is designing a large hardware block
  for the Xilinx UltraScale+ family of FPGAs.
The designer is specifically aiming to use 
  the UltraScale+'s specialized DSP48E2 units,
  which 
  can implement 
  combined multiplication, arithmetic,
  and logic operations, as 
  captured in this 
  simplified
  block diagram~\cite{userguide}:
\begin{center}
\includegraphics[width=.75\columnwidth]{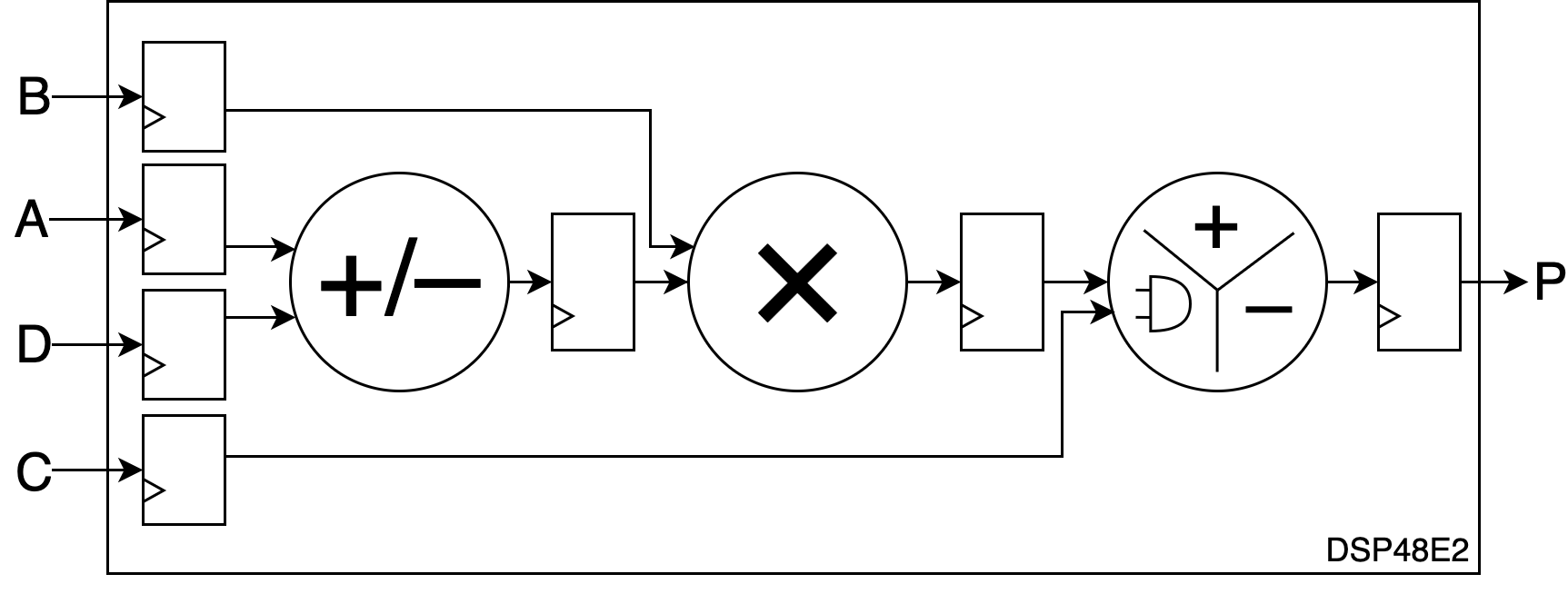}
\end{center}
The designer's hardware block
  involves the computation
  \texttt{(d+a)*b\&c},
  which the manual states is implementable with a single DSP.
In particular, suppose the design
  consists of four separate instances of the following computation:
\begin{minted}[fontsize=\footnotesize]{verilog}
 for(i=0; i<4; i++) begin
  r[i] <= (d[i] + a[i]) * b[i] & c[i];
 end
\end{minted}
It would be reasonable for the designer
  to expect the design
  to use a total of four DSPs.

\textit{\textbf{Current tools fail.}}
After compiling the design
  with existing tools,
  the designer is frustrated
  to find that the compiler returns a design
  that uses more
  resources than anticipated.
It does use four DSPs,
  but it also uses 128
  \textit{registers}
  (which hold state)
  and 64 \textit{lookup tables} 
  (LUTs, which implement logic functions).
\textbf{The compiler has thus failed to 
  fully utilize
  the DSP}---%
  it has not configured a DSP48E2
  to implement
  \texttt{(d+a)*b\&c} but has instead
  implemented a portion of the
  computation
  with LUTs and registers.
The designer now faces a choice:
  either accept the result or attempt to coax the compiler
  into returning a more optimal design.

\textit{\textbf{Coaxing the compiler, to no avail.}}
Though many 
may choose to accept a less optimal result,
 this tenacious\footnote{
This may not be purely a personal preference. 
For example, a hardware design simply may not fit on an FPGA
  without manual optimizations!}
  tries to coax
  the compiler into giving the 
  expected results
  by 
  placing the computation
  of interest into a separate module:
  
\begin{minted}[fontsize=\footnotesize]{verilog}
// add_mul_and.v: computes (a+b)*c&d in two clock cycles.
module add_mul_and(input clk, input [15:0] a, b, c, d,
                   output reg [15:0] out);
 reg [15:0] r;
 always @ (posedge clk) begin
   r <= (a+b)*c&d; out <= r;
 end
endmodule
\end{minted}

\noindent
This allows the designer
  to  apply
  specific optimizations while mapping
  the module---%
  a process we call \textit{partial design mapping}.
They attempt various strategies,
  including
  annotating the module with
  Xilinx's \texttt{use\_dsp} Verilog attribute
  (to force the compiler to use a DSP where possible)
  and using a different
  synthesis directive
  (to apply a more 
  resource-intensive synthesis
  procedure).
\textbf{Despite these efforts,
  the compiler still cannot 
  map the design
  to a single DSP,}
  instead using one DSP, 
  32 registers, and 16 LUTs.
Again, the designer must decide:
  give up and accept
  suboptimal results,
  or press on?

\textit{\textbf{Manual compilation.}}
The hardware designer presses on and
  now has only one option remaining:
  manually instantiating a DSP48E2
  with the desired behavior.
Skimming through the daunting 75-page DSP48E2's online user manual,
  the designer quickly discovers that
  configuring even
  the ``pre-add'' \texttt{a+b}
  requires correctly setting 
  multiple ports and parameters
  (\texttt{INMODE}, \texttt{AMULTSEL}, \texttt{BMULTSEL}, and \texttt{PREADDINSEL}),
  whose descriptions span 10+ pages and multiple tables.
Correctly configuring the subsequent multiplier
  and logic unit proves even more difficult
  and time-consuming.
After configuring the computational units,
  the designer must still manually ensure
  correct pipelining
  of the 10+ pipeline registers.
After hours
  of frustration,
  a configuration is found that 
  seems to work, which the designer
  inserts into the design.
Precious time has been wasted,
  most of which will need to be repeated
  to configure the DSP again.
Making matters worse,
  \textbf{the designer has no formal guarantees
  about the correctness of this DSP configuration.}
It may work in a few simulated test cases,
  but are there corner cases
  that have been missed?\tighten



\subsection{Compiling a Design to a DSP with \lr}
\label{sec:overview-part-2}
\begin{figure}
    \centering
    \hspace{-1cm}%
    \includegraphics[width=0.85\columnwidth]{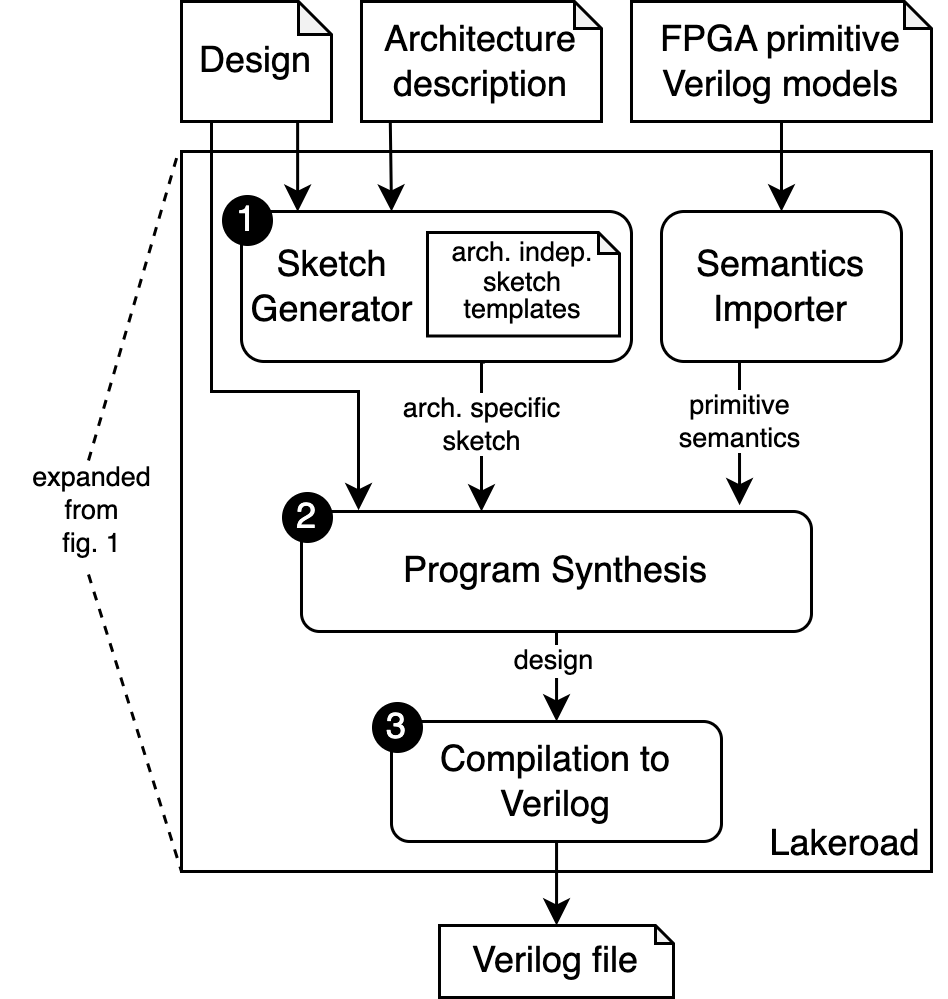}
   \caption{The components within \lr.}
    \label{fig:lakeroad-diagram}
\end{figure}

\lr can save hardware designers
  the great effort involved
  in manual DSP configuration
  while also providing correctness guarantees.
Let us imagine how the designer 
  in this example,
  frustrated by conventional tools,
  can instead proceed using \lr
  during partial design mapping.
After putting 
  \texttt{add\_mul\_and}
  into its own module,
  the designer calls
  \lr from the command line:
\begin{minted}[fontsize=\footnotesize]{bash}
$ lakeroad --template dsp \
           --arch-desc xilinx-ultrascale-plus.yml \
           add_mul_and.v
\end{minted}
The \texttt{lakeroad} command outputs
  \texttt{add\_mul\_and\_impl},
  an implementation
  of \texttt{add\_mul\_and}
  that uses a single UltraScale+ DSP48E2:
\begin{minted}[fontsize=\scriptsize]{verilog}
module add_mul_and_impl(input clk, input [15:0] a, b, c, d,
                        output [15:0] out);
 DSP48E2 #(
  .ACASCREG(32'd0), .ADREG(32'd0), .ALUMODEREG(32'd0),
  .AMULTSEL("AD"), .AREG(32'd0), .AUTORESET_PATDET("NO_RESET"), 
  // ...plus 30+ more parameters
 ) DSP48E2_0 (
   .A({ 14'h0000, a }), .ACIN(30'h00000000), .ALUMODE(4'hc),
   .B({ 2'h0, b }), .BCIN(18'h00000), .C({ 32'h00000000, c }),
   .CARRYCASCIN(1'h0), .CARRYIN(1'h0), .CARRYINSEL(3'h6),
   // ...plus 30+ more ports
 );
endmodule
\end{minted}
Unlike current compilers,
  \lr has produced an implementation
  using a single DSP48E2
  by utilizing more of the DSP's features.
Importantly, this compiled design
  is also formally guaranteed
  to implement
  the input \texttt{add\_mul\_and}
  design.

How does \lr provide 
  \textit{verified, more complete} support
  for the DSP48E2 over existing tools?  
At the core of \lr's
  correctness and completeness
  is
  \textit{sketch-guided program synthesis},
  a technique that  
  begins with a program \textit{sketch}, which
  captures a rough
  outline of a program
  and uses
  automated reasoning tools
  (e.g., SMT solvers)
  to fill in the sketch's \textit{holes.}
As shown in \cref{fig:lakeroad-diagram}, 
  \lr uses the following three-step process  
  to generate an efficient and correct
  DSP48E2 implementation
  of
  the \texttt{add\_mul\_and}
  design.

\textit{\textbf{Step 1: Generating a Sketch.}}
In
  the \texttt{add\_mul\_and}
  example,
  \lr
  generates
  the following sketch,
  which we refer to as \texttt{sketch}:\footnote{
Though this example is presented in a Verilog-like language,
  \lr's sketches are actually encoded in a Racket DSL that resembles structural Verilog.}
\begin{minted}[fontsize=\scriptsize]{verilog}
module sketch(input clk, input [15:0] a, b, c, d,
              output [15:0] out);
 DSP48E2 #(
  .ACASCREG(??), .ADREG(??), .ALUMODEREG(??), .AMULTSEL(??), 
  .AREG(??), .AUTORESET_PATDET(??), ...
 ) DSP48E2_0 (
   .A({ 14'h0000, a }), .ACIN(??), .ALUMODE(??), 
   .B({ 2'h0, b }), .BCIN(??), .C({ 32'h00000000, c }),
   .CARRYCASCIN(??),  .CARRYIN(??), .CARRYINSEL(??), ...
 );
endmodule
\end{minted}
This sketch consists of a single DSP48E2 instance
  with \textit{holes} 
  (represented by \texttt{??}) 
  serving as placeholders for most of its ports and parameters.
It is easy to see  
  the parallels
  between \texttt{sketch}
  and
  \texttt{add\_mul\_and\_impl};
  \texttt{sketch}
  is simply 
  \texttt{add\_mul\_and\_impl}
  with holes.
But how does \lr generate \texttt{sketch}
  in the first place?

To maximize portability across architectures,
  \lr does not store sketches 
  like \texttt{sketch}
  directly; 
  instead, it \textit{generates} sketches
  from architecture-independent
  \textbf{sketch templates.}
Instead of storing
  the preceding UltraScale+--specific sketch,
  \lr generates the sketch
  from the DSP sketch template, which
  the designer has chosen to use 
  with the \mbox{\texttt{-{}-template dsp}} flag.
A simplified form of this template looks like the following:
\begin{minted}[fontsize=\scriptsize]{verilog}
module dsp_sketch_template(input clk,
                           input [n-1:0] a, b, c, d,
                           output [n-1:0] out);
 DSP dsp_instance(.clk(clk), .A(a), .B(b), .C(c), .D(d), .out(out));
endmodule
\end{minted}
Sketch templates
  capture hardware design patterns
that are common across FPGA architectures
  in an
  architecture-independent way.
\texttt{dsp\_sketch\_template},
  for example, 
  captures
  a basic pattern, i.e., 
  instantiating a single DSP.
\lr includes 
  sketch templates of varying complexity,
  from the simplicity of the one above 
  to the complexity of LUT-based multipliers.
Though new sketch templates
  can be added easily,
  in most cases
  (as in this example)
  users can simply apply
  \lr's provided templates.

To specialize \texttt{dsp\_sketch\_template}
  into \texttt{sketch},
  \lr translates
  the sketch template's generic \texttt{DSP}
  \textbf{primitive interface}
  into an UltraScale+--specific
  DSP48E2
  using the UltraScale+
  \textbf{architecture description.}
The generic \texttt{DSP}
  module is an instance of a
  \textbf{primitive interface:} 
  a \lr-introduced abstraction that 
  captures the similarities
  between primitives across 
  diverse FPGA architectures.
For example, 
  \lr's DSP primitive interface
  captures the facts that
  DSPs on all FPGA platforms
  generally have two to four data inputs
  (captured by \texttt{A}--\texttt{D})
  and a clock input
  (captured by \texttt{clk}).
To convert the
  sketch template's
  DSP primitive interface instance
  into a DSP48E2,
  \lr utilizes the
  Xilinx UltraScale+ architecture description,
  which the designer has pointed to with the 
  \texttt{-{}-arch-desc xilinx-ultrascale-plus.yml}
  flag.
An \textbf{architecture description}
  specifies how \lr's various
  primitive interfaces
  are implemented for a given architecture.
The following simplified snippet of the UltraScale+
  architecture description, for example,
  tells \lr that,
  when generating a sketch for 
  UltraScale+,
  instances of the DSP
  primitive interface
  should be implemented with a DSP48E2:
\begin{minted}[fontsize=\scriptsize]{yaml}
- interface: {name: DSP, params: { out-width: 48, a-width: 30, ...}} 
  holes: [?ACASCREG, ?ADREG, ?ALUMODEREG, ?AREG, ...]
  implementation:
    module: DSP48E2
    ports: [{ name: A, bitwidth: 30, value: A }, ...]
    parameters: [{ name: ACASCREG, value: ?ACASCREG }, ...]
    outputs: { O : P }
\end{minted}
Thus, while converting
  \texttt{dsp\_sketch\_template}
  into
  \texttt{sketch},
  \lr reads this architecture description
  and converts the single DSP instance
  into a DSP48E2,
  filling the ports and parameters with the concrete values
  and holes
  contained in the architecture description.
Architecture descriptions
  are usually short (100-400 LoC)
  and
  written only once per FPGA architecture; 
  \lr already contains such descriptions
  for Xilinx UltraScale+,
  Lattice ECP5,
  Intel Cyclone 10 LP,
  and the open-source FPGA SOFA~\cite{sofa}.

To generate a sketch,
  \lr takes an architecture-independent
  sketch template
  and specializes it using an
  architecture description.
Once the sketch is ready,
  the designer can move on to synthesis.

\textit{\textbf{Step 2: Program Synthesis.}}
  The next step 
  fills
  in the holes to generate
  a complete, correct
  hardware design,
  which is done automatically
  using a technique called
  \textbf{program synthesis}.
\textit{Program synthesis} is the process of
  using automated reasoning tools
  (like SMT solvers) 
  to generate correct programs
  by encoding program generation
  as a constraint solving problem.
In our \texttt{add\_mul\_and} example,
  \lr, aided  by
  Rosette~\cite{torlak2013growing,torlak2014lightweight}, 
  generates a query like the following:\footnote{
  We formalize this synthesis query and explain it precisely in \cref{sec:formalization}.
}
\footnotesize
\begin{multline*}
\exists \ \mathtt{ACASCREG}, \mathtt{ADREG}, ...\ . \forall \mathrm{inputs}. \\
  \texttt{add\_mul\_and}(\mathrm{inputs}) 
  = \texttt{sketch}(\mathrm{inputs}, \mathtt{ACASCREG}, \mathtt{ADREG}, ...)
\end{multline*}
\normalsize
The query asks:
  are there concrete values for
  \texttt{ACASCREG}, \texttt{ADREG}, etc.,
  that will make our sketch's behavior
  equivalent to the input design's behavior
  on all inputs?
If the solver finds such values,
\lr can use them to fill the holes
  in the sketch
  and produce a compiled design.
However, if \lr tries to pass the preceding formula
  to an SMT solver,
  the solver will throw an error since
  the query is not expressed
  at a level 
 it understands, viz., %
  as equalities
  between bitvector expressions,
  using simple Boolean or arithmetic
  operations.
While it is conceivable that \texttt{add\_mul\_and}
  could be converted to
  a bitvector expression
  since its core computation is already
  expressed as
  \texttt{(a+b)*c\&d},
it is unclear how to express
  \texttt{sketch}
  as an expression over bitvectors.
In particular, \lr must express
  bitvector-level semantics
  for Xilinx's DSP48E2 primitive.

To generate bitvector-level semantics
  for complex FPGA primitives,
  \lr introduces the concept of
  \textbf{semantics extraction}.
Rather than requiring
  manual effort to encode the semantics
  of the underlying hardware,
  which is notoriously difficult
  even for experts~\cite{Bernstein2021WhatAT},
  \lr's key insight is that these challenges
  can be avoided altogether
  by extracting low-level semantics
  directly from
  vendor-supplied simulation and verification models.
\lr builds on internal passes in 
  Yosys~\cite{wolf2013yosys} 
  to automatically extract
  solver-ready semantics from these vendor-provided HDL models,
  which we detail in \cref{sec:implementation:importing-semantics}.
For the
  \texttt{add\_mul\_and} example,
  the DSP48E2's semantics have already been imported into \lr.
Semantics need to be imported only when
  adding support for a new architecture, i.e., %
  about as infrequently
  as writing a new architecture description.
In most cases,
  \lr users can rely on already-imported semantics.
  
With the sketch generated
  and the DSP48E2's semantics imported,
  program synthesis can begin.
\lr utilizes Rosette to drive program synthesis,
  as detailed in \cref{sec:formalization}.
In our example, Rosette returns
  a configuration for the DSP48E2.
The last step, then, is to convert
  the compiled design
  to Verilog.
  
\textit{\textbf{Step 3: Compilation to Verilog.}}
Compiling \lr's internal representation into Verilog
  is a purely 
  one-to-one syntactic mapping;
  no optimizations are done at this stage,
  reducing the likelihood that bugs could be inserted.
In our example,
  the final Verilog produced
  results in the
  \texttt{add\_mul\_and\_impl}
  we saw at the start of \cref{sec:overview-part-2}.

\textit{\textbf{In summary.}}
\lr
  delivered 
  an implementation
  of the designer's
  \texttt{add\_mul\_and}
  module,
  improving upon both state-of-the-art compilers
  and manual approaches
  in multiple ways.
\lr's implementation
  is significantly more resource-efficient
  than the state-of-the-art compiler's---%
  one DSP versus one DSP,
  32 registers, and 16 LUTs.
\lr delivered its implementation
  in mere seconds,
  compared to the hours to days
  of work
  that manually instantiating
  a DSP might take.
Lastly,
  \lr's implementation
  is formally guaranteed to be correct.
Meanwhile, \lr did all of this
  while requiring no input from the user
  other than the Verilog to be compiled.



\section{Formalization}
\label{sec:formalization}

\newsavebox\boxlet
\newsavebox\boxassign
\newsavebox\boxin
\newsavebox\boxreg
\newsavebox\boxprim
\savebox{\boxlet}{\lstinline[language=thelang]!let!}
\savebox{\boxassign}{\lstinline[language=thelang]!:=!}
\savebox{\boxin}{\lstinline[language=thelang]!in!}
\savebox{\boxreg}{\lstinline[language=thelang]!Reg!}
\savebox{\boxprim}{\lstinline[language=thelang]!Prim!}

\newcommand{\Prim}[0]{\lstinline[language=thelang]{Prim}\xspace}

\newcommand{\Reg}[0]{\lstinline[language=thelang]{Reg}\xspace}

\newcommand{\Let}[0]{\lstinline[language=thelang]{let}\xspace}

\begin{figure*}

\begin{minipage}{.35\textwidth}
\begin{tabular}{l@{\hspace{.5em}}l}
$\SynProg$     & $\defn$ $\langle$ $\SynId$, $\seq{\SynId, \SynNode}*\rangle$\\[0.5em]
$\SynNode$       & $\defn \SynBv\ b$  | $\SynVar\ x$ \\
                 & \ \   | $\Op\ op\ \SynId$* \\
                 & \ \   | \IRReg{} $\SynId$ $(\SynBv~b)$ \\
                 & \ \   | \IRPrim{} $\mathsf{binds}$ $\SynProg$ \\
                 & \ \   | $\blacksquare_x$ \\[0.5em]
    
%
\end{tabular}
\end{minipage}%
\begin{minipage}{.25\textwidth}
\footnotesize{}
\begin{tabular}{l l}
$\SynId$  & $id  \in \mathbb{N}$ \\  
Bitvectors  & $b  \in \mathbb{BV}$ \\
Variables & $x  \in \text{LegalVarNames}$\\[0.5em]
Operators     & $op \in$ $\OpBv  \cup  \OpWire$\\[0.5em]
$\mathsf{binds}$ & $bs \in (\text{Variables} \rightharpoonup \SynId)$
\end{tabular}
\end{minipage}
\begin{minipage}{.4\textwidth}
\footnotesize{}
\begin{tabular}{l l}
\;\;\; Wire op & $\OpWire =$ \lstinline[language=thelang,mathescape]!$\{$concat$,$extract$, \ldots \}$! \\
\;\;\; Non-wire op & $\OpBv =$ \lstinline[language=thelang,mathescape]!$\{+, -, \times, \ldots \}$! \\
\end{tabular}
\end{minipage}

\caption{Syntax of $\UberLang$. $\blacksquare_x$ is a syntactic hole, labeled with variable $x$. $A \rightharpoonup B$ denotes the set of partial functions from $A$ to $B$.}
\label{fig:syntax}
\end{figure*}



We now formalize \lr{} 
  with functions $\lrfn$ and
  $\lrfnbmc$,
  and use these models
  to argue for the correctness
  and partial completeness of \lr{}. 
We first define  
  $\lrfn$ (\cref{subsec:the-lr-function}) and then 
motivate and define
  the language $\UberLang{}$,  
  specify its syntax and semantics,
  and define behavioral ($\SpecLang$), structural ($\ImplLang$), and sketch ($\SketchLang$) sublanguages (\cref{subsec:lrir-syntax-and-semantics}).
{
We next explain the
  underlying queries
  \lr{} uses to
  synthesize hardware programs
  that meet the desired specification
} (\cref{sec:formalization-program-synthesis}).
  We demonstrate the correctness
  and partial completeness of $\lrfn$,
  enumerate our Trusted Computing Base
  (\cref{subsec:lr-correctness-and-completeness}) and 
extend $\lrfn$ to $\lrfnbmc$,
    which ensures the generated program's
    semantics matches the design over multiple
    timesteps (\cref{subsec:lrfn-bmc}).
Finally,
 we highlight potential future
 applications that could be
 built on this section's formalization
 (\cref{sec:sem-beyond}).

\subsection{The \lr Function $\lrfn$}
\label{subsec:the-lr-function}

We model the execution of \lr
  with the partial function
  \small
\[\lrfn: \Sketch \times \SpecLang \times \Time \rightharpoonup \ImplLang,\]
\normalsize
where $\lrfn(\Psi, d, t)$
  invokes Rosette
  to synthesize a $t$-cycle
  implementation
  of behavioral design $d$ 
  using sketch $\Psi$ to guide
  the search,
  where a $t$-cycle
  implementation of $d$ is a program
  that is equivalent to $d$ at clock cycle
  $t$.
By not requiring program equivalence before
    clock cycle $t$ we allow the
    synthesized program's semantics to
    differ from the design during 
    an initialization period
    (e.g., as the pipeline is being filled).
To get guarantees beyond a single
    point in time $t$, we generalize
    $\lrfn$ to $\lrfnbmc$,
    which synthesizes a program
    that is equivalent to the design
    from time $t$ to $t + n$.
We formalize a sketch $\Psi \in \Sketch$  as a tuple
  $(\psi, h)$,
  where $\psi$ is a program in $\SketchLang$ 
  and $h$ is a map
  from the holes in $\psi$
  to a finite set of valid hole-free
  nodes in $\ImplLang$
  that can be used to fill
  the mapped hole.
This mapping $h$ is handled implicitly by Rosette's
  \texttt{choose} and \texttt{hole} constructs and
  need not be explicitly specified by the
  sketch writer.
We write $\lrfn(\Psi, d, t) = p$ to 
    indicate that synthesis succeeded and produced
    \lr{} program $p$.
However, it is possible that sketch $\Psi$
    cannot implement $d$, in which case
    the synthesis fails
    (i.e., returns UNSAT)
    and
    $\lrfn$ does not return anything.
Design $d$ belongs to 
  $\UberLang$'s behavioral fragment,
  $\SpecLang$ (see
  \cref{subsec:lrir-syntax-and-semantics}).
When $t = 0$, $\lrfn$ 
  synthesizes a 
  \textit{combinational design}; when 
$t > 0$, $\lrfn$ 
  synthesizes a \textit{sequential design}
  over $t$ clock cycles.
The rest of this section 
  considers sequential design synthesis 
  since its combinational counterpart is 
  a special case covered
  by our general approach.

\subsection{Defining $\UberLang$}
\label{subsec:lrir-syntax-and-semantics}

\lr uses the 
  $\UberLang$ language to
  translate behavioral
  HDL programs
  to structural, 
  hardware-specific
  HDL programs.
To facilitate this
  translation, we
  designed $\UberLang$
  to satisfy the
  following properties:
\begin{enumerate}[label=\textbf{P\arabic*}.]
    \item \textit{Easy translation to/from HDLs:}
        we must be able
        to translate
        designs from 
        a behavioral HDL
        to $\UberLang$
        and translate
        synthesized implementations
        to a structural HDL.
        
    \item \textit{Support parallel stateful execution:}
        FPGA designs
        consist of
        potentially stateful elements
        executing in parallel.
        $\UberLang$ must
        allow unambiguous
        parallel execution.

    \item \textit{Support graph-like program structures:}
        An FPGA component's outputs
        can be wired to
        multiple other components,
        including back
        to itself.
        This means that 
        FPGA programs can
        form arbitrary
        graphs, and $\UberLang$ must
        be able to express this.
        
    \item \textit{Support for sequential designs:}
        $\UberLang$ must handle designs
        that run over multiple clock cycles.
        
    \item \textit{Support for different architectures:}
        $\UberLang$ must handle FPGA components
        from different architectures.
\end{enumerate}
We describe how $\UberLang$ satisfies
  P1-P5 when we 
  define its syntax and semantics
  in the following subsections.

\subsubsection{$\UberLang$'s Syntax}
\label{subsubsec:syntax}

\Cref{fig:syntax} shows the $\UberLang$ syntax.
An $\UberLang$ program $\SynProg$
  consists of a root node ID
  and a graph of nodes,
  each of which is
  referred to by its ID.
A \textit{node} 
  can be 
  a constant bitvector,
  input variable,
  combinational (pure) operator,
  sequential (stateful) register,
  primitive,
  or hole.
Given a program 
  $p = (r, \langle id_1, {node}_1\rangle
           \ldots
           \langle id_n, {node}_n\rangle)$, 
  we use the notation 
  $p.root = r$, 
  $p.ids = \left\{id_1, \ldots, id_n\right\}$,
  and
  $p[id_i] = {node}_i$.
We define the free variables of a program $p.fv = \{x_i\}$ as the set of variable names occurring in $p$'s nodes of the form $(\SynVar\ x_i)$.\footnote{Note that this does not include variables of sub-programs occurring recursively inside of \IRPrim nodes.}  Finally, we use the notation $p.all\_ids$ for $p.ids$ together with $p'.all\_ids$ of any subprogram $p'$ of $p$ ($p'$ is a subprogram of $p$ if $\exists j, node_j = \textrm{\IRPrim}~bs~p'$).

Given a node $n$,
  we specify its inputs
  with the following function:
  \small
\begin{align*}
  &\textsc{inputs}(\SynBv\ b) = \{\}, \\
  &\textsc{inputs}(\SynVar\ x) = \{\}, \\
  &\textsc{inputs}(\Op\ op\ i_1 \ldots i_n)
     = \{i_1,\, \ldots,\, i_n\} \\
  &\textsc{inputs}(\text{\IRReg}\ i\ b_{init})
     = \{i\} \\
  &\textsc{inputs}(\text{\IRPrim}\ 
                     bs\ p')
                = \{bs[x]\ |\ x \in \textrm{domain}(bs)\}
\end{align*}
\normalsize
Note that we use 
    $A \rightharpoonup B$ 
    to denote the set
    of partial functions
    from $A$ to $B$;
    given 
    $bs \in A \rightharpoonup B$, 
    we write $\textrm{domain}(bs)$
    to denote
    the set of 
    $x\in A$ s.t.
    $bs[x]$ is defined.


A program $p$ is well-formed
  if and only if
  all the following
  conditions hold:

\begin{enumerate}[label=\bfseries{W\arabic*}.]
  \item $p.root \in p.ids$;

  \item All ids are unique and distinct. (i.e. for any sub-program $p'$, $p.ids$ and $p'.all\_ids$ are disjoint, and for any two sub-programs $p'$ and $p''$, $p'.all\_ids$ is disjoint from $p''.all\_ids$.)
  
  \item The inputs of all nodes in $p$ are ids of other nodes in $p$:
     $\forall id \in p.ids$, 
     $\text{inputs}(p[id]) \subseteq p.ids$;

  \item All primitive nodes contain 
      well-formed programs;
  \item All primitive nodes bind exactly their free variables; i.e., for $\text{\IRPrim}\ bs\ p'$, $\textrm{domain}(bs) = p'.fv$; and
        
  \item Program $p$ is free of combinational loops (formalized below in \cref{property:free-of-combinational-loops}).
        %
\end{enumerate}

\begin{property}[Free of Combinational Loops]
\label{property:free-of-combinational-loops}
Formally, a program $p$ is free of combinational
loops if there exists a function
$w : p.all\_ids \to \mathbb{N}$, that satisfies the following properties (collectively ``monotonicity''):
\begin{enumerate}
\item If $p[id] = \text{\IRReg{}}~\_~\_$, then $w(id) = 0$;
\item If $p[id] = \text{\IRPrim{}}~bs~p'$, then $w(id) > w(p'.root)$;
\item if $p[id] = \text{\IRPrim{}}~bs~p'$ and $p'[id'] = Var~x$, \\ then $w(id') > w(bs[x])$; and
\item Otherwise (e.g., $p[id] = \Op~op~ids^{*}$), \\
    if $id' \in \textsc{inputs}(p[id])$,
    then $w(id) > w(id')$. 
\end{enumerate}
\end{property}
\noindent The function $w$ acts as a witness to the 
absence of combinational loops because it is
impossible to define a strictly monotonic function without acyclicity.
We consider only well-formed
  $\UberLang$ programs.

%

$\SynBv$, $\SynVar$, and $\Op$ nodes
  encode bitvectors, variables, and
  operators.

\Reg $i_{data}\ b_{init}$ nodes
  let \UberLang implement
  sequential designs (P4).
$i_{data}$ is the 
  register's data input,
  which updates the stored
  value at the positive edge
  of each clock cycle,
  and $b_{init}$ is the
  register's initialization value.

\IRPrim{} $bs\ p$ nodes
  let \UberLang programs
  use hardware-specific components
  from different architectures (P5).
The $bs$ component is a \textit{variable map},
  mapping $\SynVar$s to input $\SynId$s.
The $p$ component is an $\UberLang$ program
  that defines the semantics of the
  hardware primitive.
A \Prim node also carries some metadata 
  used during compilation
  to a structural
  HDL, which we omit 
  for clarity.\tighten

$\SpecLang$ is the concrete
  \textit{behavioral}
  fragment of $\UberLang$ used for
  writing specifications; it 
  is formed by
  excluding \Prim
  nodes and holes
  from $\UberLang$.
  
$\ImplLang$ is the concrete
  \textit{structural}
  fragment of $\UberLang$ used
  for lowering $\UberLang$
  to structural HDLs; it 
  is formed
  by excluding \Reg nodes, $\Op$ nodes,
  and holes from $\UberLang$,
  with the following
  exception:
  the $p$ term
  in $\usebox{\boxprim}\ bs\ p$ must
  always be from the $\SpecLang$ since it 
  is used to specify the semantics
  of the \Prim node
  to the synthesis engine.
The behavioral node $p$
  is not used during
  compilation to HDL,
  and this behavioral
  expression does not 
  propagate to the
  structural HDL output.

$\SketchLang$ is another sublanguage of $\UberLang$ that is $\ImplLang$ but also including holes. 
Let $s$ be a program in $\SketchLang$
  with holes 
  $\blacksquare_{x_1}, \ldots, \blacksquare_{x_k}$.
These holes can be \emph{filled}
  with nodes $n_1, \ldots, n_k$
  in $\ImplLang$ by replacing
  each hole $\blacksquare_{x_i}$ 
  with its 
  corresponding node $n_i$
  to obtain a complete
  $\ImplLang$ program,
  denoted by $s[\blacksquare_{x_1} \mapsto n_1, \ldots]$.

The simplicity of this syntax
  makes translating to and from
  HDLs straightforward (P1).
\cref{sec:implementation}
  describes how \lr{} implements the
  translations to and from HDLs.

\subsubsection{$\UberLang$'s Semantics}
\label{subsubsec:semantics}

\begin{figure}


\small


\begin{flalign*}
  & \Time \hspace{0.2cm} t\in\mathbb{N} \hspace{1cm}  \textsf{Env} \hspace{0.2cm} e \in (\SynVar \rightharpoonup \Time \to \SynBv) \\
  &\vspace{0.5cm} \\
  &\textsc{Interp}\ :\ \SynProg \to \textsf{Env} \to \Time \to \SynNode \to \SynBv\\
  &\textsc{Interp}\ p\ e\ t\ (\SynBv\ b)\ = b&& \\
  &\textsc{Interp}\ p\ e\ t\ (\SynVar\ x)\ = e\ x\ t&& \\
  &\textsc{Interp}\ p\ e\ 0\ (\text{\IRReg}\ \_\ init)\ = init&& \\
  &\textsc{Interp}\ p\ e\ (t + 1)\ (\text{\IRReg}\ id\ \_)\ = \textsc{Interp}\ p\ e\ t\ p[id]&& \\
  &\textsc{Interp}\ p\ e\ t\ (\Op\ \mathsf{op}\ ids)\ = \llbracket op\rrbracket \ (\text{map}\ (\lambda id \ .\ \textsc{Interp}\ p\ e\ t\ p[id])\ ids)&& \\
  &\textsc{Interp}\ p\ e\ t\ (\text{\IRPrim}\ bs\ p')\ =&& \\
  &\quad \text{let}\ e' = \lambda x, t'\ .\ \textsc{Interp}\ p\ e\ t' \left(p[bs\ x] \right) \text{in}\\
  &\quad \textsc{Interp}\ p'\ e'\ t\ p'[p'.root]
\end{flalign*}



\caption{\lr's semantics as pseudocode.}
\label{fig:lr-interpreter-pseudocode}
\end{figure}

Before discussing the formal
  semantics of \UberLang,
  we present key 
  definitions.
We assume 
  a \textit{bitvector type}
  and, for simplicity,
  we elide bitvector
  widths.
We represent \textit{time} as a
  natural number.
A \textit{stream} is a function from $\Time$
  to bitvectors.
An \textit{environment} is
  a map from
  variable names
  to streams.

We give the semantics 
  for $\UberLang$ as an interpreter in 
  \cref{fig:lr-interpreter-pseudocode}.
We define the function \textsc{Interp} to interpret a program $p$ in environment $e$ at time $t$ and node $n$.
We do not define semantics for holes,
  as they are intended to be replaced
  by other constructs with well-defined semantics.

Most of the rules are straightforward.
A bitvector $\SynBv\ b$
  evaluates to
  its backing bitvector value $b$.
A variable node $\SynVar\ x$ 
  in an environment $e$
  at time $t$
  evaluates to
  the value returned 
  by the stream associated
  with $x$ in $e$
  at time $t$; 
using function notation, this is
  denoted by $e\ x\ t$.
A $k$-ary operator node
  $\Op{}\ op\ i_1\ldots i_k$ recursively
  interprets each operand in the current
  environment at the current time 
  and then applies $op$'s semantics,
  denoted $\llbracket op \rrbracket$,
  to the resulting values.
A register \IRReg $id\ b_{init}$ has two cases 
  depending on the current time: 
at time $t = 0$, a register evaluates
  to its initial bitvector value $b_{init}$; 
at nonzero times $t + 1$, a register evaluates
  to the value produced by the input $i$
  at the \textit{previous} timestep $t$.
A primitive \IRPrim{} $bs\ p'$
  in environment $e$ at time $t$
  is evaluated by interpreting
  the program $p'$ under
  the fresh environment $e'$
  formed by the binding map $bs$.

\subsection{Program Synthesis}
\label{sec:formalization-program-synthesis}


$\lrfn$ performs sketch-based program
  synthesis~\cite{solar2008program}.
Operationally, we implement
  the \textsc{Interp} function from 
  \Cref{fig:lr-interpreter-pseudocode}
  in Rosette, a solver-aided host
  language~\cite{torlak2014lightweight}.
Let sketch $\Psi = (\psi, h) \in \Sketch$, where
  $\psi \in \SketchLang$ has holes
  $\blacksquare_{x_{i}}$
  and $h$ maps $\psi$'s holes
  to the set of structural nodes
  that can legally fill the mapped hole.
Given a design $d$,
  we query Rosette if there
  are nodes 
  $n_1, n_2, \ldots n_k$ 
  such that 
  $n_i \in h[\blacksquare_{x_i}]$
  and
  $p = \Psi[\blacksquare_{x_1} \mapsto n_1, \ldots]$
   is well-formed and equivalent to $d$
   (i.e., we ask Rosette to fill
   each hole with a node associated with the node in $h$).
Program equivalence between well-formed
  programs $p$ and $d$ at time
  $t$, written $p\cong_t d$, is defined as
  \begin{align*}
    & p.fv = d.fv\ \wedge \\
    &\forall e\ s.t.\ \textrm{domain}(e) = p.fv, \\
    &\quad\textsc{Interp}\ p\ e\ t\ p.root =
   \textsc{Interp}\ d\ e\ t\ d.root.
  \end{align*}
In \cref{subsec:lrfn-bmc}, we
  use bounded model checking to
  extend $\lrfn$'s guarantees
  beyond the single timestep
  at clock cycle $t$.

\subsection{Correctness and Completeness of $\lrfn$}
\label{subsec:lr-correctness-and-completeness}

Recall that the synthesis
  function $\lrfn$ is partial.
We say that $\lrfn$
  is \emph{correct} if
  it 
  returns a program
  $\lrfn(\Psi,d,t) = p$  where
  $p$ is
  a well-formed
  completion of $\Psi = (\psi, h)$,
  meaning $p = \Psi[\blacksquare_{x_1} \mapsto n_1, \ldots]$
            such that $n_i \in h[\blacksquare_i]$ for all $i$
  and $p\cong_t d$.

Furthermore, we say
  that $\lrfn$ is
  \emph{sketch-complete}
  if $\lrfn(\Psi,d,t)$
  is defined whenever
  there exists a
  well-formed completion
  $p$ of $\Psi$
  such that $p\cong_t d$.
That is, synthesis is
  correct if it never
  returns an erroneous result
  and sketch-complete
  if it returns a
  correct result
  whenever one exists.

{
We have implemented $\lrfn$
    with Rosette 
    (see \cref{sec:formalization-program-synthesis}),
    which guarantees our system is correct and complete
    under the following assumptions:
\begin{enumerate}
    \item Correctness of Rosette and underlying SMT solvers;
    \item That our encoding of \lr{} is bug-free;
    \item That the lowering of \textsc{Interp}
    to SMT formulas by  Rosette always terminates.
    This is possible when partial evaluation of \textsc{Interp} on arguments $p$, $t$ and $n$ terminates (independently of the value of $e$).
\end{enumerate}

}

\begin{lemma}
\label{lemma:interp-is-primitive-recursive}
Let $p$ be a well-formed program, 
    $e$ an environment,
    $t$ a \Time,
    and $n$ be a node belonging to $p$.
Then \textsc{Interp} is primitive recursive 
    (i.e. terminates) in the arguments $p$, $t$, and $n$.\tighten
\end{lemma}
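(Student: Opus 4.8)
The plan is to prove termination---which is all that ``primitive recursive'' asks for here, as the parenthetical and the third assumption of \cref{subsec:lr-correctness-and-completeness} indicate---by exhibiting a well-founded measure on the recursive calls made by \textsc{Interp} and checking that it strictly decreases across every call; since each equation spawns only finitely many direct calls, the call tree is finitely branching, so the absence of an infinite descending chain makes it finite by K\"onig's lemma and evaluation halts. First I would fix the well-formed program $p$ from the statement and apply well-formedness condition~W6 (\cref{property:free-of-combinational-loops}) to obtain a monotonicity witness $w : p.all\_ids \to \mathbb{N}$; since a program is a finite list of nodes with finitely many recursively-finite subprograms, $p.all\_ids$ is finite and $w$ is bounded by some constant $W$. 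Next I would observe, by inspecting the defining equations of \cref{fig:lr-interpreter-pseudocode} together with conditions~W1, W3, W4, and~W5, that every recursive call triggered by evaluating $\textsc{Interp}\ p\ e\ t\ n$ has the form $\textsc{Interp}\ p'\ e'\ t'\ (p'[id'])$ with $p'$ equal to $p$ or to one of its (well-formed) subprograms and $id' \in p'.ids \subseteq p.all\_ids$; in particular, all roots, operator inputs, register data inputs, and binding-map targets encountered are legitimate ids, so $p'[id']$ is always defined and $w(id')$ always makes sense. To such a call I would assign the measure $\mu = (t', w(id'))$, ordered lexicographically on $\mathbb{N} \times \mathbb{N}$, which is well founded.

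The core of the argument is a case analysis showing $\mu$ strictly decreases along each recursive edge. The $\SynBv$, $\SynVar$, and $t = 0$ register cases make no recursive calls. A register at time $t + 1$ recurses on its data input at time $t$, so $\mu$'s first component drops outright. An $\Op$ node recurses on each of its finitely many inputs at the same time, and monotonicity condition~(4) makes $w$ strictly smaller, so $\mu$'s second component drops. A primitive node $\text{\IRPrim{}}\ bs\ p'$ recurses on $p'[p'.root]$ at the same time, and monotonicity condition~(2) gives $w$ strictly smaller there too. The delicate case---the one I expect to be the main obstacle---is the environment $e'$ built in the primitive-node rule: its body $\lambda x, t'.\ \textsc{Interp}\ p\ e\ t'\ (p[bs\ x])$ defers calls back up into the \emph{parent} program, which fire later whenever interpretation of $p'$ (or a sub-subprogram of it) reaches a $\SynVar\ x$ node. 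Such a deferred call keeps the current time but, by monotonicity condition~(3) applied at the primitive node whose body contains that $\SynVar\ x$ node, lands on an id with strictly smaller $w$, so $\mu$ still drops. With the decrease verified in every case, the recursion terminates, uniformly in the environment $e$, because the case split and the arguments $p'$, $t'$, $id'$ of each child depend only on $p'$, $t'$, and the current node---never on $e'$---which is exactly the statement of the lemma.

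If one wants ``primitive recursive'' in the literal recursion-theoretic sense rather than mere termination, I would close by noting that $\mu$ is bounded by $(t + 1)(W + 1)$, a primitive recursive function of $t$ with $W$ a constant determined by $p$, so \textsc{Interp} can be rephrased with an explicit decreasing fuel parameter of that size and becomes manifestly primitive recursive in $p$, $t$, and $n$. The only real hurdle throughout is the primitive-node/environment interaction above: because evaluating a primitive node both descends into a subprogram and schedules up-calls through $e'$, no purely syntactic ordering on subterms works, and one is forced to lean on the combinational-loop witness $w$ supplied by well-formedness.
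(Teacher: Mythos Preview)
Your proposal is correct and follows essentially the same approach as the paper's proof: both use the lexicographic measure $(t, w(id))$ built from the combinational-loop witness $w$ supplied by \cref{property:free-of-combinational-loops}, and both perform the same case analysis, including the same treatment of the \IRPrim{}/$\SynVar$ interaction where the environment closure $e'$ triggers an up-call whose target has strictly smaller $w$ by monotonicity condition~(3). Your write-up is slightly more explicit in places---invoking K\"onig's lemma for finite branching and sketching the fuel-parameter reformulation for literal primitive recursion---but the underlying argument is the same.
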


\begin{proof}[Proof of Lemma~\ref{lemma:interp-is-primitive-recursive}]
Recall that a function $f(x,y,z)$ is
    primitive recursive in arguments $x$ and $y$
    (under a lexicographic ordering) 
    if in the definition of $f$ every
    recursive call $f(x',y',z')$ is made
    with values $(x',y')$ such that $x' < x$ 
    or $x' = x \wedge y' < y$.
    If $x$ and $y$
    are drawn from the natural numbers 
    (or another well-ordered set),
    then the recursion is guaranteed to terminate.

Under what order is 
  \textsc{Interp} primitive recursive?
Because our program
    is well-formed, it must be free
    of combinational loops (see \cref{property:free-of-combinational-loops}).
Formally, this means we have an acyclicity
    witness function $w : p.all\_ids \to \mathbb{N}$
    that monotonically increases in the direction of
    dataflow in our circuit.
Each node $n$ argument passed to \textsc{Interp}
    has an \textsf{Id} that is unique and distinct
    from the \textsf{Id}s used in $p$ or any of $p$'s
    subprograms (\textbf{W2});
    we denote this \textsf{Id} as $id_n$.
We can associate each $n$ argument
    to a recursive call of \textsc{Interp}
    with a number $w(id_n)$.
We claim that \textsc{Interp} is
    primitive recursive under
    the lexicographic ordering on $(t, w(id_n))$.
    
To prove this claim we need to demonstrate that
    if \textsc{Interp} with time and node
    arguments $t'$ and $n'$ makes a recursive
    call to  \textsc{Interp} with time and
    node arguments $t''$ and $n''$, then the following
    condition holds:
    \small
\begin{equation}
\label{eqn:primitive-recursion-condition}
    t'' < t' \vee \left(t'' = t' \wedge w(id_{n''}) < w(id_{n'})\right).
\end{equation}
\normalsize
To do this it suffices to examine each case of \textsc{Interp}'s definition.

When $n'$ is a $\SynBv$ constant,
    \textsc{Interp} makes no recursive calls,
    and the condition in \cref{eqn:primitive-recursion-condition}
    holds vacuously.
    
When $n'$ is a \IRReg{} node \textsc{Interp} either terminates
  (when $t'=0$) or makes a
  recursive call with time value $t'' = t' - 1$,
  maintaining the condition in \cref{eqn:primitive-recursion-condition}.
  
When $n'$ is an operator node, 
  \textsc{Interp} recursively interprets
  the operands with time arguments $t'' = t'$.
However, each operand's id $id''$ belongs to $\textsc{inputs}(n')$,
    and, by~\cref{property:free-of-combinational-loops},
    $w(id_{n'}) > w(id'')$,
    so our condition holds.
  
This leaves us with the less obvious
  cases in which $n'$ is either a \IRPrim or $\SynVar$,
  which work together in tandem.
When $n' = \text{\IRPrim{}}~bs~p'$,
    \textsc{Interp} makes a recursive
    call with node argument $p'.root$
    and time argument $t$.
By ~\cref{property:free-of-combinational-loops},
    $w(p'.root) < w(id_{n'})$,
    and the condition in \cref{eqn:primitive-recursion-condition} holds.
\textsc{Interp} also defines a new environment for execution
    of $p'$ via $\lambda$-abstraction, and this in turn
    will recursively invoke \textsc{Interp}.
These environments are only invoked by the rule for variables,
    which we handle presently.
    
When $n' = \SynVar~x$, the environment is 
    invoked on variable $x$.
Here, there are two possible cases.
First, we are interpreting the
    top-level program $p$. 
As this is the initial, top-level environment, there is no further recursion.
Second, we are
    interpreting a sub-program $p'$
    and $e'~x~t = \textsc{Interp}~p~e~t~(p[bs~x])$
    is actually a recursive call into the
    program $p$ one level up,
    with its environment $e$.
In this latter case,
    note that $w$ is defined such that
     $w(id_{p[bs~x]}) = w(bs~x) < w(id_{\SynVar~x})$
    (item 3 of \cref{property:free-of-combinational-loops}),
    satisfying our property.
All cases are complete.
\end{proof}


From this, we conclude that
  all possible substitutions for $\Psi$
  are attempted, and $\lrfn$ is sketch-complete.


\paragraph{Trusted Computing Base.}

The \textit{trusted computing base} (TCB) of a
  system is the set of components
  it assumes to be correct~\cite{MacKenzieComputingTrust}.
A bug anywhere in the TCB
  could cause the guarantees
  made by that system to be violated.
\lr's  TCB includes:
  Rosette and the underlying
      SAT/SMT solvers that Rosette queries
      (Bitwuzla, cvc5, Yices2, and STP);
  the internal Yosys passes \lr
      uses to extract primitive semantics
      and translate design specifications
      from behavioral Verilog into
      $\SpecLang$;
  the semantics for $\UberLang$,
    which we assume conservatively
    models non-cyclic (DAG) designs;
  our code to translate from
      the $\ImplLang$ to
      structural Verilog; and
  the vendor-provided Verilog
    simulation models for FPGA primitives.
Each TCB component 
  has also been thoroughly tested,
  as described in \cref{sec:evaluation}.
Importantly,
  sketches and sketch generation
  are \textit{not} in \lr's TCB: %
  even if there were a
  bug in \lr's sketch-related components,
  it would not violate
  \lr's correctness guarantees.

\subsection{Multiple Clock Cycle Guarantees with $\lrfnbmc$}
\label{subsec:lrfn-bmc}

The preceding completeness and 
    correctness properties for
    $\lrfn$ 
    guarantee that
    running the
    synthesized program
    $p$ and the design $d$
    for $t$ clock cycles
    produces the same output.
To extend this guarantee, \lr supports
    a form of
    bounded model checking, 
    where
    synthesis ensures that
    $p$ is semantically equivalent
    to $d$ for $c$ additional clock cycles
    starting at time $t$.
We formalize this
    with the function $\lrfnbmc$,
    which takes a sketch $\Psi$,
    a behavioral design $d$,
    a number of clock cycles $t$,
    and a model checking
    time bound $c \geq 0$
    and returns an implementation
    $p \in \ImplLang$
    that is equivalent to $d$
    at time steps $t, t+1, \ldots, t + c$.

Our correctness and completeness guarantees are
    similar to those for $\lrfn$:

    \small
  \begin{align*}
    & p.fv = d.fv\ \wedge \\
    &\forall e\ s.t.\ \textrm{domain}(e) = p.fv, \\
    &\quad\bigwedge_{i=t}^{i=t+c}
      {\textsc{Interp}\ p\ e\ i\ p.root = \textsc{Interp}\ d\ e\ i\ d.root}.
  \end{align*}
  \normalsize

\subsection{Beyond \lr}
\label{sec:sem-beyond}

$\UberLang$, its semantics,
  and the synthesis approach we describe here 
  are useful for applying program synthesis
  to other hardware design problems.
For example,
  the synthesis problem detailed above could be ``flipped''
  to decompile structural designs back 
  to higher-level behavorial designs,
  i.e., synthesizing from $\ImplLang$
  to an expression in $\SpecLang$.
Such decompilation has seen recent
  interest for
  recovering equivalent but faster-to-simulate
  models and for porting models across
  different architectures~\cite{sisco2023loop}.
As another example,
  the synthesis approach could be
  adapted to help port designs by
  synthesizing expressions in
  $\ImplLang$ that use one set of primitives
  on one architecture from 
  other designs in $\ImplLang$ that use
  a different set of primitives from
  a different architecture.
Thus, the formalization in this section
  transcends the particular challenges
  of FPGA technology and provides
  a reusable foundation for exploring
  a much broader range of hardware design challenges
  from a program synthesis perspective.
  
\section{Implementation}
\label{sec:implementation}


\lr is composed of
  approximately 13K lines of Racket
  plus approximately 58K lines
  of Racket
  automatically generated from 
  vendor-supplied Verilog.
Vendor-supplied Verilog
  was obtained from Lattice Diamond,
  Intel Quartus,
  and Xilinx Vivado
  sources.
We used Vivado version v2023.1,
  Quartus 22.1std.1 Build 917 02/14/2023 SC Lite Edition,
  Diamond version 3.12,
  Yosys version 0.36+42 (commit \texttt{70d3531}),
  the cvc5~\cite{barbosa22cvc5} and Yices2~\cite{dutertre2006yices,dutertre2014yices}
  solvers
  included in the 2023-08-06 release of \texttt{oss-cad-suite} from YosysHQ,
  the Bitwuzla solver at commit \texttt{b655bc0}~\cite{bitwuzla},
  the STP solver at commit \texttt{0510509a},
  Racket version 8.9~\cite{racket,racket:ref},
  and Rosette version 4.1~\cite{rosette4}.

\subsection{Primitive Interfaces}
\label{sec:impl-primitive-interfaces}

As described in \cref{sec:overview},
\textit{primitive interfaces}
  describe abstract versions
  of common FPGA primitives,
  which allow sketch templates
  to be architecture-independent.
To date, \lr declares primitive interfaces for
  $n$-input LUTs, $w$-width carry chains, 
  $n$-input muxes,
  and DSPs with up to four data inputs and one clock input.
The next section includes a concrete example
  of \lr's LUT4 primitive interface.\tighten


\subsection{Architecture Descriptions}
\label{sec:impl-arch-descs}

\begin{figure}
\begin{minted}[fontsize=\footnotesize]{YAML}
implementations:
  - interface: { name: LUT, num_inputs: 4 }
    internal_data: { sram: 16 }
    modules:
      - module_name: frac_lut4
        filepath: SOFA/frac_lut4.v
        ports:
          - { name: in, direction: in, width: 4, 
              value: (concat I3 I2 I1 I0) }
          - { name: mode, direction: in,
              width: 1, value: (bv 0 1) }
          - { name: lut4_out, direction: out,
              width: 1 }
        parameters: [{ name: sram, value: sram }]
    outputs: { O: lut4_out }
    \end{minted}
    \caption{
SOFA architecture description.
  }
    \label{fig:sofa-architecture-description}
\end{figure}

\noindent As described in \cref{sec:overview},
  \textit{architecture descriptions}
  convey the information
  required to convert
  each instance of a primitive interface
  into the corresponding
  architecture-specific module,
  which occurs while converting
  sketch templates
  into sketches.
The architecture description is
  the only additional input that
  may be required from a user
  to support a new architecture;
  it is a one-time effort 
  that is reusable
  for any designs in an architecture.
Architecture descriptions
  are simply lists
  (provided as YAML files)
  of the primitive interfaces
  that an architecture implements,
  but, crucially,
  also include 
  architecture-specific
  port and parameter values
  in a map called \texttt{internal\_data}.
Values in this map
  become symbolic values
  solvable by the SMT solver.
Additional constraints can also be specified
  in the architecture description 
  to rule out invalid configurations
  and minimize the solver's search space.\tighten

As an example,
  \cref{fig:sofa-architecture-description}
  shows the architecture description
  for the SOFA~\cite{sofa}
  FPGA architecture.
The description contains
  a single primitive interface implementation, i.e., 
  LUT4.
\lr's LUT4 primitive interface
  standardizes the names of a LUT4's inputs and outputs,
  naming the inputs
  \texttt{I0} through \texttt{I3}
  and the output
  \texttt{O}.
The SOFA implementation 
  of the LUT4 primitive interface
  uses
  the SOFA-specific
  \texttt{frac\_lut4}
  primitive.
Primitive interface inputs \texttt{I0} through \texttt{I3}
  are mapped to
  the actual
  input port of the \texttt{frac\_lut4},
  named \texttt{in}.
Likewise, the \texttt{frac\_lut4} output
  \texttt{lut4\_out}
  is mapped to the primitive interface output \texttt{O}.
The \texttt{internal_data} field
  declares \texttt{sram},
  the LUT's 16-bit internal memory,
  as an architecture-specific detail
  to be solved during synthesis.
 
If a sketch template uses a primitive interface
  not included in the architecture description
  (e.g., SOFA does not implement carries),
  \lr may still be able to implement the primitive interface
  based on primitive interfaces
  the architecture \textit{does} implement.
To date, \lr can implement any mux
  with LUTs, 
  a larger LUT from  smaller LUTs,
  a smaller LUT from a larger LUT,
  a carry from LUTs,
  and a smaller DSP from a larger DSP;
  it handles these conversions during sketch generation.

\subsection{Sketch Templates, Sketches, and Sketch Generation}

\label{sec:impl-sketch-templates}

As described in \cref{sec:overview},
  \lr 
  captures common FPGA implementation patterns
  in reusable, architecture-independent
  \textit{sketch templates.}
Thus far, we have described only 
  the relatively simple
  \texttt{dsp} sketch template,
  which instantiates a DSP.
As a more complex example
  of capturing common FPGA implementation patterns,
  consider
  the \texttt{bitwise-with-carry}
  sketch template, which
  uses $n$ LUTs and a carry chain
  to implement designs such as 
  addition or subtraction.
As of the paper's publication date, 
  \lr provides 5 sketch templates: 
  \texttt{dsp},
  \texttt{bitwise}, \texttt{bitwise-with-carry},
  \texttt{comparison} (LUT- and carry-based arithmetic comparison), 
  and
  \texttt{multiplication} (LUT-based multiplication).
 
The process of converting sketch templates
  to sketches
  is implemented as described in
  \cref{sec:overview}
  and \cref{sec:impl-arch-descs}.
\lr iterates over every
  primitive interface instance
  in the sketch
  and replaces it with
  the concrete primitive
  in accordance with
  the architecture's 
  architecture description.
If the architecture description
  does not implement the requested
  primitive interface,
  \lr checks whether it can implement
  the primitive interface with other
  implemented interfaces
  (e.g., implementing a smaller LUT with
  a larger LUT)
  and raises an error otherwise.

Sketch templates and sketches alike 
  are written in a domain-specific language (DSL)
  embedded into Rosette,
  whose implementation closely mirrors the syntax
  and semantics of \UberLang. 
The only significant difference is that
  the interpreter implementation
  does not use bitvector streams natively.
Instead, each invocation of the interpreter
  represents a single timestep,
  and all intermediate values from the previous timestep
  are taken as input.
Streams are then built up
  using multiple invocations of the interpreter.

\subsection{Importing Semantics from Verilog Modules}
\label{sec:implementation:importing-semantics}

\lr uses 
  Yosys~\cite{wolf2013yosys}
  to convert Verilog modules
  into the \texttt{btor2} format~\cite{btor} 
  and then converts the resulting \texttt{btor2}
  to Rosette/Racket code.

Due to the semantics of the Verilog language
  and the internal implementation of Yosys,
  extracting semantics from Verilog modules
  may require the following manual modifications
  to accommodate semantics extraction and synthesis:
  
\begin{itemize}[leftmargin=*]
\item As Yosys converts
  parameters from variables
  to constant values
  immediately upon module import,
  module parameters should be converted to
  ports
  to ensure they remain variables
  (and thus solvable by the SMT solver).
Note that not all parameters 
  can always be converted to ports,
  meaning some parameters cannot be solved for.
\item Strings should be converted to bitvectors.
\item All registers should be initialized.
\item All instances of \texttt{x} and \texttt{z} values should be  
  converted to 2-state logic (0 or 1).
\end{itemize}
Note that these caveats
  apply only 
  to our prototype implementation,
  not the general technique
  of semantics extraction from HDL.
Once these manual modifications are made,
  the following series of Yosys passes
  can be used to convert the Verilog
  into suitable \texttt{btor2}:
\texttt{prep; flatten; pmuxtree; opt_muxtree; clk2fflogic; prep; write_btor}.

We implement
  the translation from \texttt{btor2}
  to Rosette bitvector expressions
  as a 1:1 translation 
  since both languages
  are simply operations over bitvectors.

\subsection{Program Synthesis and Compilation to Verilog}
\label{sec:implementation:program-synthesis}

We implement
  the synthesis procedure
  defined in \cref{subsec:lr-correctness-and-completeness}
  with Rosette.
Multiple clock cycle guarantees,
  as described in \cref{subsec:lrfn-bmc},
  are implemented simply by making $c+1$
  total assertions,
  asserting the output of the input design
  and the sketch are equal
  after each of the $c+1$ timesteps.
We use a portfolio solving
  method, running
  Bitwuzla~\cite{niemetz2020bitwuzla},
  cvc5~\cite{barbosa22cvc5},
  Yices2~\cite{dutertre2006yices,dutertre2014yices},
  and STP~\cite{stp}
  in parallel
  and using results from the first
  solver to terminate.
To produce Verilog,
  \lr compiles the program from its internal DSL
  to the JSON format defined
  by Yosys using a straightforward translation 
  and then uses Yosys to output Verilog.

\subsection{Integration with Other Tools}

This paper describes \lr
  as a standalone tool,
  but the core \lr implementation
  could be integrated directly into
  existing tools.
Though out of scope for this paper,
  we have early,
  encouraging results
  integrating \lr
  as a Yosys pass
  that lets users tag modules
  with annotations similar to 
  (and much richer than) Xilinx's
  \texttt{use\_dsp} annotation.
We then map annotated modules 
  to primitives using \lr,
  which let us easily apply \lr to
  many fragments within a larger design.
We plan to more fully
  integrate \lr into Yosys in future work,
  which should radically improve the completeness
  of Yosys's DSP mapping ability,
  as shown in \cref{fig:xilinx-completeness}.\looseness-1

\section{Evaluation}
\label{sec:evaluation}

We now evaluate
  \lr in terms of completeness
  and extensibility.
In the following experiments,
  we target four FPGA architectures:
  \textbf{Xilinx UltraScale+},
  commonly used
  for large, high-performance workloads;
  \textbf{Lattice ECP5},
  commonly used 
  in low-power, low-cost scenarios;
  \textbf{Intel Cyclone 10 LP},
  an FPGA designed for low-cost,
  high-volume use cases,
  and 
  \textbf{SOFA}~\cite{sofa},
  a recent, open-source
  FPGA developed by the
  research community.
We compare \lr to existing
  technology mappers.
For
  Xilinx Ultrascale+, Lattice ECP5, and
  Intel Cyclone 10 LP,
  we compare \lr against both
  the open source toolchain Yosys~\cite{wolf2013yosys}
  and the
  state-of-the-art,
  proprietary, closed source
  toolchains
  for each architecture.\footnote{
    Again, licensing restrictions 
    prevent our naming the specific 
    proprietary tools, but
    they are familiar, standard packages 
    used by many hardware designers.}\tighten
The experiments were conducted 
  on a system running Ubuntu 20.04.3 
  with an AMD EPYC 7702P 64-Core CPU.
The resident set size of a single \lr process
  did not exceed
  300MB while running our evaluation.
We use the software versions listed in \cref{sec:implementation}.

\subsection{\lr Completeness}
\label{sec:completeness}
The reliance of many technology mappers,
  including state-of-the-art tools,
  on hand-written patterns
  leads them to fail when attempting to map many workloads
  that \textit{should} be mapped to a single DSP.
In particular, 
  the process of partial
  design mapping 
  (illustrated in \cref{sec:overview}) 
  becomes
  a laborious endeavor because
  of this incompleteness:
  hardware designers
  hand-instantiate DSPs rather
  than rely on substandard
  automated tooling, repeating
  the work each time they identify a potential
  opportunity to use a DSP.
\lr's greater mapping completeness 
  significantly
  reduces the burden on hardware
  designers during partial
  design mapping and marks 
  the first step in automated
  mapping for full designs.
  We next evaluate how \lr's program synthesis approach
  enables it to achieve greater completeness
  for these program fragments.
  
\paragraph{\textnormal{\textit{\textbf{Evaluation Setup.}}}}

We highlight three particularly complex
  DSPs for the Xilinx Ultrascale+,
  Lattice ECP5,
  and Intel Cyclone 10 LP architectures: 
  the Xilinx DSP48E2,
  Lattice ALU54A/MULT18X18C 
  (a single DSP composed of 
  two primitives),
  and Intel cyclone10lp\_mac\_mult.
SOFA provides no DSP, and is not included
  in this part of the evaluation.
For each architecture's DSP,
  we enumerate a large subset 
  of the designs
  theoretically mappable
  to a single DSP 
  according to its
  configuration manual.
This microbenchmark set
  aims to capture
  the real-world designs
  which hardware designers
  would attempt to map
  to a platform's DSP.
For each architecture, we compare \lr
  to both the corresponding
  state-of-the-art toolchain for
  the architecture 
  as well as to Yosys.
For Xilinx Ultrascale+, 
  the DSP48E2 configuration manual details the
  structure of designs mappable
  to the primitive.
Our designs for Xilinx include 
  all permutations of the design form
  $((a \pm b) * c) \odot d$,
  where $\odot \in \{\&, |, \pm, \oplus \}$,
  as well as designs of the forms $(a * b)$ and $((a * b) \pm c)$.
We pipeline each of these
  workloads from zero to three stages 
  and use bitwidths from 8 to 18 bits.
For the DSP on Lattice, we similarly enumerate
  all designs of the form $(a * b) \odot c$, 
  where $\odot \in \{\&, |, \oplus, \pm\}$, and of the form $(a * b)$.
For each of these designs, 
  we use zero to two stages
  and bitwidths from 8 to 18 bits.
This results in
    1320 microbenchmarks for Xilinx UltraScale+,
    396 for Lattice ECP5,
    and 66 for Intel Cyclone 10 LP.
Though \lr's output is correct by construction,
  we further validate its output
  by simulating each \lr-compiled design
  over thousands of consecutive cycles
  using Verilator.\tighten
  
\paragraph{\textnormal{\textit{\textbf{Comparison to Existing Toolchains.}}}}
As demonstrated in Figure~\ref{fig:xilinx-completeness} (top),
  \lr maps $44\times$ more
  designs than Yosys
  and $2.1\times$ more designs
  than the proprietary,
  state-of-the-art toolchain on Xilinx Ultrascale+.
On Lattice ECP5,
  \lr maps $6.0\times$ more
  designs than Yosys 
  and $3.6\times$ more designs
  than the proprietary,
  state-of-the-art toolchain.
On Intel Cyclone 10 LP,
  \lr successfully maps all designs:
  $3\times$ more designs
  than the proprietary,
  state-of-the-art toolchain for Intel.
Yosys fails to map a single design
  on Intel.
State-of-the-art toolchains
  for all architectures fail
  to map more than half
  of the queried designs.
\lr times out on 
less than 20\% of designs.%
\footnote{We restricted Rosette
synthesis time to 
120 seconds, 40 seconds, and 20 seconds for
Xilinx, Lattice, and Intel
respectively, and
marked failure past that (though bitvector synthesis problems
are decidable).}
Note that \lr
  returns ``UNSAT'' on 
  approximately 260
  designs on UltraScale+, i.e., 
  \lr claims there is
  \textit{no} possible mapping
  to a DSP48E2 for the
  requested workload.
In all of these cases,
  both Xilinx SOTA
  and Yosys
  agree with \lr 
  and do not map the designs
  to a single DSP.
We conclude that 
  the set of designs we presented in 
  \textit{Evaluation Setup}
  must be overly broad;
  though the documentation implies
  that all of these designs are mappable
  to a single DSP,
  all three Xilinx synthesis tools 
  surveyed indicate that they are
  indeed not mappable.

For timing, we compared the mapping time for each
  of the tools
  and report the results
  in Figure~\ref{fig:xilinx-completeness} (bottom).
The wide ranges for \lr
  show that solver time for different
  program synthesis queries
  is highly variable.
This is explored more deeply in
  \cref{fig:lakeroad-synth-time},
  which shows that most synthesis queries
  terminate quickly,
  with a long tail of slower queries.
Note that the state-of-the-art technology
mapper for Ultrascale+ has a slow running time
  due to its long start-up process.\tighten

Regarding which solvers
  in the portfolio were
  most useful,
  of all terminating
  (success or UNSAT)
  \lr experiments,
  Bitwuzla was the first
  to complete
  for 671 of them,
  STP for 519,
  Yices2 for 464,
  and cvc5 for 64.

\lr's greater completeness
  directly translates into resource reduction.
On average, for each microbenchmark,
  \lr uses 3.9 fewer LEs
  (logic elements: LUTs, muxes, or carry chains)
  and 7.5 fewer registers than the Xilinx SOTA,
  7.2 fewer LEs/11.9 fewer registers than the Lattice SOTA,
  8.2 fewer LEs/14.3 fewer registers than the Intel SOTA,
  and 33.3 fewer LEs/11.4 fewer registers than Yosys. 
In the real world, the small modules
  captured by our microbenchmarks
  may be reused dozens
  if not hundreds of times
  across a large design.
Thus, the sizable resource
  reduction \lr provides on a single
  microbenchmark
  will be multiplied significantly
  for an entire design.\tighten

\paragraph{\textnormal{\textit{\textbf{Discussion.}}}}
Compared to Yosys,
  it is clear that
  \lr provides more complete support
  for
  programmable DSPs.
However, \lr's greater completeness
  over Yosys
  is perhaps not surprising since 
  Yosys is an open-source tool
  still under active development.
Part of the appeal
  of the Yosys toolchain
  is the diversity of backends
  it can target;
  these results show that, if incorporated
  into Yosys, \lr
  would further increase
  Yosys's flexibility and generality.
Perhaps most surprising
  is that \lr is more complete
  than
  specialized proprietary toolchains. 
Even the UNSAT results \lr produces 
  can be useful to designers 
  since they indicate
  potential flaws
  in the documentation or vendor-provided semantics.
In the context of a larger
  synthesis tool, \lr
  would provide stronger
  guarantees for mapping
  modules of larger designs.

\begin{figure}
    \centering
\includegraphics[width=0.44\textwidth]{assets/succeeded\_vs\_failed\_all.png}

\vspace{1em}
\footnotesize

    \begin{tabular}{|l|S[table-format=3.2]|S[table-format=3.2]|S[table-format=3.2]|}
    \hline
     Tool & {Median Time (s)} & \multicolumn{2}{c|}{Min / Max Time (s)} \\ \hline
    \multicolumn{4}{|c|}{\textbf{Xilinx}} \\ 
    \hline
         \lr & 14.99 & 2.99 & 127.70 \\  \hline
         SOTA Xilinx & 261.61 & 227.82 & 598.67 \\ \hline
         Yosys & 14.97 & 6.66 & 21.10 \\ \hline
    \multicolumn{4}{|c|}{\textbf{Lattice}} \\
    \hline
         \lr & 9.49 & 6.70 & 55.23 \\ \hline
         SOTA Lattice & 2.32 & 0.95 & 4.52 \\ \hline
         Yosys & 2.31 & 0.90 & 4.01 \\ \hline
    \multicolumn{4}{|c|}{\textbf{Intel}} \\
    \hline
         \lr & 2.92 & 2.12 & 4.13 \\ \hline
         SOTA Intel & 38.73 & 19.11 & 43.49 \\ \hline
         Yosys & 0.96 & 0.48 & 1.88 \\ \hline
    \end{tabular}


    \caption{
Results of 
  the completeness experiments
  described in \cref{sec:completeness},
  measuring the completeness 
  of technology mapping tools for DSPs on Xilinx UltraScale+ and Lattice ECP5,
  plus timing information.
A single bar in the bar chart
  communicates,
  for a given FPGA architecture
  and technology mapper,
  the proportion of the microbenchmarks
  that the given technology mapper could map to a single DSP.
In \lr's case, experiments can 
  either succeed 
  (\lr maps the microbenchmark
    to a single DSP),
  timeout,
  or return UNSAT.
For the other tools,
  experiments can either
  succeed
  or fail
  (i.e., the tool returns a mapping,
    but the mapping uses more than
    a single DSP).
There are a total of 
  1320 experiments/microbenchmarks for Xilinx,
  396 for Lattice,
  and 66 for Intel.
    }
    \label{fig:xilinx-completeness}
\end{figure}
\begin{figure}
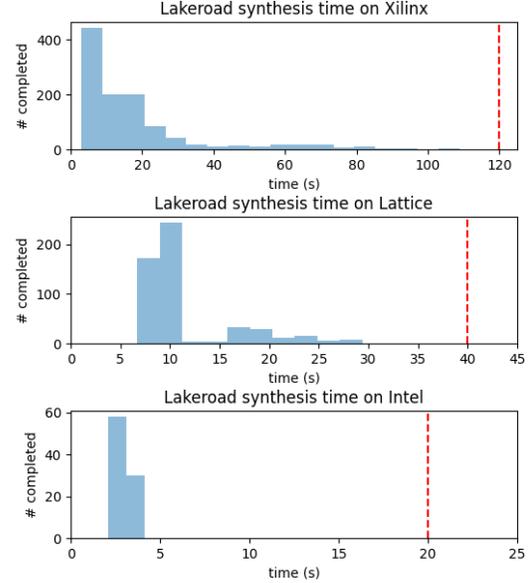

    \centering
    \includegraphics[width=0.43\textwidth]{assets/time/lakeroad\_time\_xilinx.png}
    \includegraphics[width=0.43\textwidth]{assets/time/lakeroad\_time\_lattice.png}
    \includegraphics[width=0.43\textwidth]{assets/time/lakeroad\_time\_intel.png}
\caption{
Histograms of \lr program synthesis runtime
  for all terminating
  (success or UNSAT) \lr experiments
  described in \cref{sec:completeness}, with timeout thresholds indicated with a vertical
  dotted red line.
}
    \label{fig:lakeroad-synth-time}
\end{figure}


\CatchFileDef{\LatticeResourceTable}{generated/lattice-resource-table.tex}{}
\CatchFileDef{\XilinxResourceTable}{generated/xilinx-resource-table.tex}{}
\CatchFileDef{\SofaResourceTable}{generated/sofa-resource-table.tex}{}
\CatchFileDef{\BigTable}{generated/ThE-BiG-TaBlE-WiTh-EvErYtHiNg-In-It.tex}{}
\CatchFileDef{\LatticeTable}{generated/LaTtIcE.tex}{}
\CatchFileDef{\XilinxTable}{generated/xilinx-non-auto.tex}{}
\CatchFileDef{\LatticeTable}{generated/lattice-non-auto.tex}{}
\CatchFileDef{\SofaTable}{generated/SoFa.tex}{}
\CatchFileDef{\SofaTable}{generated/sofa-non-auto.tex}{}
\CatchFileDef{\FinalTable}{generated/expressive-select.tex}{}
\newcommand{\titletilt}{90}

 


\subsection{\lr Extensibility and Expressiveness}

In addition to being
  correct by construction (\cref{sec:formalization}) and
  more complete 
  than existing FPGA technology mappers (\cref{sec:completeness}),
  \lr can also easily extend to new FPGA architectures.
Furthermore, automatic primitive semantics extraction 
  from vendor-provided HDL simulation models
  enables \lr to support diverse, highly configurable
  FPGA primitives.

The architecture descriptions
    vary in length from 20 to 240
    source lines of code (SLoC).
%
SOFA (20 SLoC) is the simplest, shown in full
  in \cref{fig:sofa-architecture-description}.
The descriptions for Xilinx (185 SLoC), 
  Lattice (240 SLoC), and Intel (178 SLoC)
  are longer since those
  FPGA architectures provide a
  wider range of configurable primitives.

As a point of comparison,
  the open-source Yosys toolchain,
  which has roughly 200 contributors on GitHub,
  provides technology mapping
  for Xilinx UltraScale+
  across over a dozen complex
  Verilog, C++, and Python files (about 1300
  lines of code).
We cannot provide similar numbers
  for state-of-the-art proprietary tools,
  but a developer
  of one such technology mapper
  shared that extending their tool to
  support new FPGA architectures
  was extremely difficult since it 
  ``spans millions of lines of low-level C.''
This is not surprising; Yosys aims to
  target a variety of vendor architectures, 
  while proprietary tools have teams of
  engineers to extract better mapping 
  (evident by Yosys' limitations
  in \cref{sec:completeness}).
By contrast,
  \lr supports
  diverse architectures and
  is easy to extend.
Even if a user
  wants to target a completely
  new architecture that
  \lr does not support,
  architecture-independent
  sketch templates allow reuse
  of previously implemented mapping
  strategies, and the user is
  only required to provide
  a few lines of
  high-level configuration
  for each primitive in 
  the architecture description.
  
\cref{table:imported-primitves} further
  highlights \lr's expressiveness,
  i.e., its ability to support a diverse
  range of configurable primitives
  by automatically extracting semantics from
  vendor-provided HDL simulation models.
\lr can import the semantics
  of large configurable primitives, 
  such as the UltraScale+ DSP (896 lines of Verilog)
  or Lattice ECP5's ALU and multiplier units (1642 and
  795 lines of Verilog, respectively).
It is difficult and error-prone
  to manually formalize the full semantics for these primitives;
  partial support by ad hoc search procedures
  that rely on syntactic pattern matching
  leads to missing many mapping opportunities,
  as shown in \cref{sec:completeness}.




\begin{table}
\caption{FPGA primitives imported 
  automatically by \lr from vendor-provided 
  Verilog models, with number of source lines of code 
  (excluding comments and empty lines) of 
  the original Verilog models.} 
\centering
\footnotesize
\label{table:imported-primitves}
\begin{tabular}{lrr}
 {\bf FPGA}   & \textbf{Primitive} & {\bf Verilog SLoC} \\\hline
 Xilinx Ultrascale+  & LUT6 & 88       \\
                     & CARRY8 & 23       \\
                     & DSP48E2 & 896       \\
                     \hline
 Lattice ECP5 & LUT2 & 5       \\
              & LUT4 & 7       \\
              & CCU2C & 60       \\
              & ALU54A & 1642 \\ 
              & MULT18X18C & 795 \\
              \hline
 Intel Cyclone 10 LP  & cyclone10lp\_mac\_mult   &  319       \\ 
              \hline
 SOFA      & frac\_lut4   &  69       \\ 
\end{tabular}
\end{table}

\section{Related Work}
\label{sec:background-and-related-work}

To the best of our knowledge,
  \lr is the first work
  to apply the technique of program synthesis
  to FPGA technology mapping.
Indeed, as noted by Sisco \textit{et al.}~\cite{sisco2022synthesis},
  program synthesis has seldom been applied
  in the domain of hardware design although
  its underlying formal methods techniques
  are frequently used for
  the \textit{formal verification}
  of hardware designs rather than compilation,
  as in Bluespec SystemVerilog~\cite{nikhil2004bluespec},
  \koika~\cite{bourgeat2020essence},
  and Kami~\cite{choi2017kami}.
Sisco \textit{et al.} cite two examples
  of works that use program synthesis for hardware design,
  Verisketch~\cite{ardeshiricham19verisketch} and Sketchilog~\cite{becker14sketchilog},
  both of which apply program synthesis to produce HDL implementations from high-level designs.
Other works use program synthesis
  to generate \textit{software}
  that runs on low-powered hardware,
  like Chlorophyll~\cite{phothilimthana2014chlorophyll},
  which targets extremely memory-constrained
  power-efficient processors,
  Chipmunk~\cite{gao2019chipmunk},
  which targets programmable network switches,
  and Diospyros~\cite{vanhattum2021vectorization},%
  \footnote{Diospyros uses symbolic evaluation, which is related to program synthesis, to lift imperative programs for digital signal processors into a high-level mathematical representation that can then be used with the technique of equality saturation~\cite{tate2011equality} to generate optimized code for the target devices. This is also distinct from the program synthesis techniques referenced elsewhere in this paper.}
  which generates vectorized programs for standalone digital signal processors (more powerful and general-purpose devices than the DSP units in FPGAs).
These works demonstrate the utility of program synthesis 
  for generating code that handles
  specific wrinkles in hardware designs,
  as does the use of program synthesis in \lr
  to harness the programmability of FPGA DSPs.

\lr is also related to past work in FPGA compilation and techmapping,
  much of which does not 
  entreaty to support
  programmable DSPs with as much generality.
ODIN~\cite{jamieson2005verilog} and ODIN-II~\cite{jamieson2010odin}
  are used in \textit{hard-block synthesis}
  for FPGAs,
  which is the task of mapping portions
  of hardware designs to specialized units (\textit{hard blocks}) like multipliers.
They operate purely over syntax (e.g., mapping \texttt{*} to a multiplier)
  and so are greatly limited in their ability
  to handle programmable DSPs.
The ABC~\cite{brayton2010abc} logic synthesis tool
  is used to lower hardware designs 
  into LUT and carry-chain configurations; it 
  is related to \lr in that it also uses constraint solvers
  to find configurations,
  though it is not general enough to handle
  a wide variety of programmable DSPs,
  unlike the program synthesis techniques used in \lr.
Note also that the use of
  configuration files in \lr to
  abstract away details of the FPGA architecture
  was inspired by past work in FPGA compilation,
  including OpenFPGA~\cite{tang2019openfpga}
  and the Verilog-to-Routing
  project (VTR)~\cite{rose2012vtr},
  both of which use abstract architecture descriptions
  to facilitate portability across designs,
  though these projects are limited in their support for DSPs.
Library-Parameterized Models~\cite{1993lpm, lpmaltera}
  define generic interfaces for common primitives and are also similar to \lr's primitive interfaces,
  though they are limited in their ability to represent configurable units like DSPs.\tighten

Virtual FPGA overlays~\cite{lysecky2005firm,brant2012zuma, landgraf2021compiler}
  are another approach to
  improving the mapping of hardware designs
  to hardware.
Overlays present a ``virtual''
  FPGA architecture;
  each actual architecture
  must then define a mapping
  from virtual to actual primitives.
This required translation is similar to
  \lr's requirement on users
  to implement primitive interfaces
  in an architecture description,
  though it requires more user effort.
The translation from virtual to actual architecture
  often comes with
  a steep resource
  and performance overhead.
\section{Conclusion}
\label{sec:conclusion}

This paper presents \lr, 
  a novel approach to FPGA technology mapping
  that leverages program synthesis techniques
  to provide stronger correctness and completeness guarantees 
  than state-of-the-art tools.
Because program synthesis tools
  can efficiently explore large search spaces, 
  \lr 
  can find mappings
  of hardware designs
  to FPGA DSPs
  in more cases
  than state-of-the-art tools,
  often finding more efficient implementations
  in the process.
With our techniques
  of semantics extraction
  from HDL
  and architecture-independent sketch templates,
  users must expend little manual effort 
  to apply \lr to
  a given FPGA architecture
  and extend it to handle further primitives.
Moreover, our formalization of \lr
  fosters greater confidence
  in its correctness.
\lr hence enables the 
  extensible, efficient, and correct 
  lowering of hardware designs to FPGAs,
  highlighting the effectiveness
  of program synthesis
  for FPGA technology mapping.


\section*{Acknowledgements}

This work was funded
  by generous grants
  and awards
  from
  Intel,
  the U.S. Department
  of Energy (award number DE-SC0022081),
  and the NSF (grant numbers 1836724 and 1749570).

We would like to thank
  our anonymous reviewers
  for their constructive feedback.
Thank you to
  Jonathan Balkind for serving
  as our shepherd.
Thank you to those who contributed code
  to early versions of \lr, including
  David Cao and Zihao Ye.
Thank you to Jin Yang and his
  team at Intel.
Thank you to 
  Daniel Petrisko, Scott Davidson,
  Rachit Nigam, and Adrian Sampson
  for sharing their deep knowledge of
  the hardware design workflow.
Thank you to Chandrakana Nandi
  for her enthusiasm and unwavering support.
Thank you to Claire Xenia Wolf, Nina Engelhardt,
  Jannis Harder,
  and the YosysHQ team.
Finally, thank you to the entire PLSE lab
  for their support and camaraderie.

\bibliographystyle{plain}
\bibliography{bib}

\clearpage
\appendix
%
%
%
%
%

\section{Artifact Appendix}

\subsection{Abstract}

Our artifact
  consists of a zipfile
  containing the code
  for our evaluation.
Running the evaluation code
  will reproduce all of the figures
  present in this paper,
  which artifact evaluators can
  validate against our published data.
The evaluation code
  is comprised largely of
  the following files:
  documentation in a README,
  a Dockerfile
  to automatically set up the 
  evaluation environment,
  the \lr codebase, and
  the evaluation scripts themselves
  (a mix of Python and shell scripts).
The evaluation
  should be run on an x86 machine
  running Linux (ideally Ubuntu).
The evaluation benefits from many CPU cores.
The evaluation requires at least 300GB
  of free space,
  mostly for installing proprietary hardware
  toolchains.
\tighten

\subsection{Artifact check-list (meta-information)}

{\small
\begin{itemize}
  \item {\bf Algorithm: }
    Program synthesis via Rosette.
    Hardware synthesis via traditional hardware toolchains.
  \item {\bf Program: }
    \lr, the Rosette-based hardware synthesis tool
      presented in this paper,
      plus 
      Yosys, Xilinx Vivado, Lattice Diamond,
      and Intel Quartus,
      the baseline hardware synthesis tools
      we compare against.
  \item {\bf Run-time environment: }
    Linux, ideally Ubuntu.
  \item {\bf Hardware: }
    x86 CPU, ideally with many cores.
  \item {\bf Output: }
    Images and CSV files representing
      this paper's figures and tables.
  \item {\bf Experiments: }
    Each experiment is a single run
      of a hardware synthesis tool
      (either \lr or one of our baseline tools).
    The entire experiment consists of
      thousands of these tool runs.
  \item {\bf How much disk space required (approximately)?: }
    300GB.
  \item {\bf How much time is needed to prepare workflow (approximately)?: }
    4 hours:
      3 hours to set up proprietary hardware tools,
      1 hour to build Docker image.
  \item {\bf How much time is needed to complete experiments (approximately)?: }
    2 to 10+ hours, depending on the number of cores.
    On our 64-core machine,
      the evaluation takes
      about 4 hours.
  \item {\bf Publicly available?: }
    Yes, at \url{https://github.com/uwsampl/lakeroad-evaluation}
    and archived publicly on Zenodo, see 
    DOI link below.
    
  \item {\bf Code licenses (if publicly available)?: } 
    MIT.
  \item {\bf Workflow framework used?: }
    Python DoIt.
  \item {\bf Archived (provide DOI)?: }
  \url{https://doi.org/10.5281/zenodo.10515833}
\end{itemize}
}

\subsection{Description}

\subsubsection{How to access}

We recommend downloading the zipped code repository
  from the DOI link above.
The code can also be cloned from the GitHub
  repository linked above.

\subsubsection{Hardware dependencies}

x86 CPU, preferably with many cores.

\subsubsection{Software dependencies}

Linux-based OS, ideally Ubuntu.

\subsection{Installation}

Please refer to the README in the artifact.
A more readable version of the README can be viewed
  on the GitHub repository,
  or by converting the README
  using a tool like Pandoc.

\subsection{Experiment workflow}

Please refer to the README in the artifact.

\subsection{Evaluation and expected results}

Please refer to the README in the artifact.

\subsection{Methodology}

Submission, reviewing and badging methodology:

\begin{itemize}
  \item \url{https://www.acm.org/publications/policies/artifact-review-badging}
  \item \url{http://cTuning.org/ae/submission-20201122.html}
  \item \url{http://cTuning.org/ae/reviewing-20201122.html}
\end{itemize}

\end{document}